% January 18, 2019 SPLITA SC NOTATION 

\documentclass[12pt,reqno]{amsart}
%%%%

\usepackage[active]{srcltx}

\usepackage{latexsym}
\usepackage{graphicx}
\usepackage{amsmath}
\usepackage{mathabx}
\usepackage{hyperref} %%-This make a table of content in the pdf file. 
\hypersetup{
    colorlinks,
    citecolor=black,
    filecolor=black,
    linkcolor=black,
    urlcolor=black
}
\newcommand{\Sch}{Schr\"odinger }
\newcommand{\set}[1]{\left\{#1\right\}}
\usepackage{color}

\def\Xint#1{\mathchoice
{\XXint\displaystyle\textstyle{#1}}%
{\XXint\textstyle\scriptstyle{#1}}%
{\XXint\scriptstyle\scriptscriptstyle{#1}}%
{\XXint\scriptscriptstyle\scriptscriptstyle{#1}}%
\!\int}
\def\XXint#1#2#3{{\setbox0=\hbox{$#1{#2#3}{\int}$ }
\vcenter{\hbox{$#2#3$ }}\kern-.6\wd0}}

\def\dashint{\Xint-}

\newcommand{\barg}{\mathcal{B}}

\newcommand{\edit}[1]{{\color{red}{$\clubsuit$#1$\clubsuit$}}}

%%%%%%%%%
%\newcommand{\K}{\kahler}

\newcommand{\Szego}{\szego}

\newcommand{\kk}{\left( \frac{k}{2\pi} \right)}

\newcommand{\h}{\hat} %there are too many hat in. 
\newcommand{\bex}{\begin{example}}
\newcommand{\eex}{\end{example}}
\newcommand{\bcs}{\begin{cases}}
\newcommand{\ecs}{\end{cases}}

%%%%%PENG%%%%%

\newcommand{\baa}{\begin{align*}}
\newcommand{\eaa}{\end{align*}}
\newcommand{\bea}{\begin{eqnarray*} }
\newcommand{\eea}{\end{eqnarray*} }
\newcommand{\beq}{\begin{equation} }
\newcommand{\eeq}{\end{equation} }
\newcommand{\bp}{\begin{proposition}}
\newcommand{\ep}{\end{proposition}}
\newcommand{\bt}{\begin{theorem}}
\newcommand{\et}{\end{theorem}}
\newcommand{\bpf}{\begin{proof}}
\newcommand{\epf}{\end{proof}}
\newcommand{\bl}{\begin{lemma}}
\newcommand{\el}{\end{lemma}}
\newcommand{\bc}{\begin{cor}}
\newcommand{\ec}{\end{cor}}
\newcommand{\bd}{\begin{definition}}
\newcommand{\ed}{\end{definition}}

%%%%%%%

%%%% DELETE THESE LINES FOR FINAL VERSION
%\newcommand{\ver}{{\it 9}}%<---PUT VERSION NUMBER HERE
 % THIS IS TO MARK COMMENTS, CHANGES

%\usepackage{showkeys}  % THIS SHOWS LABELS
%%%%

\headheight=6.15pt \textheight=8.75in \textwidth=6.5in
\oddsidemargin=0in \evensidemargin=0in \topmargin=0in

\newcommand{\ihbar}{{\frac{i}{h}}}

%\newcommand{\nhat}{\raisebox{2pt}{$\wh{\ }$}}

%\newcommand{\Si}{\Sigma}

%%%%%%

%\newcommand{\poly}{{\operatorname{Poly}}}
%\newcommand{\I}{{\mathbf I}}

%%%%%%%%%
%\newcommand{\w}{\ensuremath{\omega}}

\newcommand{\twiddle}[1]{\ensuremath{\widetilde{#1}}}

\newcommand{\be}{\begin{equation} }
\newcommand{\ee}{\end{equation} }
\newcommand{\bee}{\begin{eqnarray} }
\newcommand{\eee}{\end{eqnarray} }

\newcommand{\gives}{\ensuremath{\rightarrow}}

\newcommand{\x}{\ensuremath{\times}}

\newcommand{\EE}[1]{\mathbb E}
\newcommand{\abs}[1]{\left\lvert #1 \right\rvert}
\newcommand{\norm}[1]{\left\lVert#1\right\rVert}
\newcommand{\lr}[1]{\ensuremath{\left(#1\right)}}
\renewcommand{\Re}{\ensuremath{\mathrm{Re} \ }}
\renewcommand{\Im}{\ensuremath{\mathrm{Im} \ }}
\newcommand{\dell}{\ensuremath{\partial}}

\DeclareMathOperator{\sgn}{sgn}

%%%%%%%%%
%\newcommand{\wt}{\widetilde}

%%%%%%%%%

\renewcommand{\Re}{{\operatorname{Re}\,}}
\renewcommand{\Im}{{\operatorname{Im}\,}}

\renewcommand{\epsilon}{\varepsilon}

\newcommand{\szego}{Szeg\H{o} }

\newcommand{\wt}{\widetilde}

\newcommand{\PP}{{\mathbb P}}

\newcommand{\R}{{\mathbb R}}
\newcommand{\C}{{\mathbb C}}

\newcommand{\Z}{{\mathbb Z}}

\newcommand{\CP}{\C\PP}

\newcommand{\half}{{\textstyle \frac 12}}

\newcommand{\supp}{{\operatorname{Supp\,}}}
\renewcommand{\phi}{\varphi}

\newcommand{\acal}{\mathcal{A}}
\newcommand{\bcal}{\mathcal{B}}
\newcommand{\ccal}{\mathcal{C}}

\newcommand{\hcal}{\mathcal{H}}

\newcommand{\lcal}{\mathcal{L}}

\newcommand{\ncal}{\mathcal{N}}

\newcommand{\scal}{\mathcal{S}}

\newcommand{\ucal}{\mathcal{U}}

\newcommand{\wcal}{\mathcal{W}}

\DeclareMathOperator{\Ai}{Ai}

%%% Theorem %%%%
%%% Why we have so many version.....%%%%

\newtheorem{theo}{{\sc Theorem}}[section]

\newtheorem{prop}[theo]{{\sc Proposition}}
\newtheorem{defn}[theo]{{\sc Definition}}

\newtheorem{cor}[theo]{{\sc Corollary}}

\newtheorem{Lem}[theo]{{\sc Lemma}}

\newtheorem{proposition}[theo]{{\sc Proposition}}
\newtheorem{lemma}[theo]{{\sc Lemma}}

\newenvironment{example}{\medskip\noindent{\it Example:\/} }{\medskip}

\title[Interface asymptotics of Eigenspace Wigner distributions]
{Interface asymptotics of Eigenspace Wigner distributions for the
  Harmonic Oscillator }

\begin{document}
\author{Boris Hanin and Steve Zelditch}
\address{Department of Mathematics, Texas A\&M and Facebook AI Research, NYC}

\email[B. Hanin]{bhanin@tamu.edu}

\address{Department of Mathematics, Northwestern  University, Evanston, IL
60208, USA}
\email[S. Zelditch]{zelditch@math.northwestern.edu}
%\email[P. Zhou]{pengzhou@math.northwestern.edu}

\thanks{Research partially supported by    NSF grant DMS-1810747.}
\maketitle

\begin{abstract}  Eigenspaces of the quantum isotropic Harmonic Oscillator 
  $\widehat{H}_{\hbar} : = - \frac{\hbar^2}{2} \Delta + \half ||x||^2$ on $\R^d$ have extremally high multiplicites and the eigenspace projections $\Pi_{\hbar, E_N(\hbar)} $ have special asymptotic properties. This article gives a detailed study of their Wigner distributions $W_{\hbar, E_N(\hbar)}(x, \xi)$. Heuristically, if $E_N(\hbar) = E$,  $W_{\hbar, E_N(\hbar)}(x, \xi)$ is the 
  `quantization' of the energy surface $\Sigma_E$,  and should be like the delta-function $\delta_{\Sigma_E}$  on  $\Sigma_E$;  rigorously, $W_{\hbar, E_N(\hbar)}(x, \xi)$ tends in a weak* sense to $\delta_{\Sigma_E}$. But its pointwise asymptotics and scaling asymptotics have more structure. The main results  give  Bessel  asymptotics of $W_{\hbar, E_N(\hbar)}(x, \xi)$ in the interior $H(x, \xi) < E$ of $\Sigma_E$; 
  {\it interface Airy scaling asymptotics} 
  in tubes of radius
  $\hbar^{2/3}$ around $\Sigma_E$, with 
  $(x, \xi)$ either in the interior or exterior of the energy ball; and exponential decay rate sin the exterior
  of the energy surface. 
\end{abstract}
%\tableofcontents
%%\setcounter{section}{-1}

\date{\today}

%\tableofcontents
\section{Introduction} 
This article is part of a series \cite{HZZ15,ZZ16} studying the scaling asymptotics of spectral projection kernels along interfaces between allowed and forbidden regions. In this article, we are concerned with semi-classical Wigner distributions,
% \begin{equation} \label{WIGNERDEF} W_{K}(x, \xi) = \int e^{- 2 \pi i \xi %p/\hbar} K_{\hbar}(x + \half p, x - \half p) dp, \end{equation}
\begin{equation}\label{WIGNERDEF1}
 W_{\hbar, E_N(\hbar)}(x, \xi) := \int_{\R^d} \Pi_{\hbar, E_N(\hbar)} \left( x+\frac{v}{2}, x-\frac{v}{2} \right) e^{-\frac{i}{\hbar} v \cdot \xi} \frac{dv}{(2\pi h)^d} 
\end{equation} 
 of the eigenspace projections $\Pi_{\hbar, E_N(\hbar)} (x,y)$ for the isotropic Harmonic Oscillator
\begin{equation} \label{Hh}
\widehat{H}_{\hbar} =  \sum_{j = 1}^d \left(- \frac{\hbar^2}{2}   \frac{\partial^2 }{\partial
x_j^2} + \frac{  x_j^2}{2} \right) \quad \mathrm{on}\quad L^2(\R^d,dx).
\end{equation}
The notation in \eqref{WIGNERDEF1} is as follows: the  spectrum of \eqref{Hh} consists of the eigenvalues 
\begin{equation} \label{ENh} E_N(\hbar)=\hbar\lr{N+d/2},\qquad  \;\;  (N = 0, 1, 2, \dots)\end{equation} 
with multiplicities given by the composition function $p(N,d)$ of $N$ and $d$ (i.e. the number of ways to write $N$ as an ordered sum of $d$ non-negative integers). That is,
the eigenspaces 
\begin{equation} \label{VhE} V_{\hbar, E_N(\hbar)}: = \{\psi \in L^2(\R^d): \widehat{H}_{\hbar} \psi = 
E_N(\hbar) \psi \}, \end{equation}
have dimensions given by 
\begin{equation} \label{MULTS} \dim V_{\hbar, E_N(\hbar)} = p(N,d) \simeq N^{d-1}. \end{equation}
Due to extreme degeneracy of the spectrum of \eqref{Hh} when $d\geq 2$, the eigenspace projections 
\begin{equation} \label{PiDEF} 
\Pi_{\hbar, E_N(\hbar)}: L^2(\R^d) \to V_{\hbar,E_N(\hbar)}
\end{equation}
have very special properties reflecting the periodicity of the classical Hamiltonian flow and of the Schr\"odinger propagator $\exp[- \frac{it}{\hbar} \widehat{H}_{\hbar}]$  (see Definition \ref{WIGNERPROJDEF} for notation).  The focus of this article is on the Wigner distributions of individual eigenspace projections.

\begin{defn} \label{WIGNERPROJDEF} The Wigner distributions $W_{\hbar, E_N(\hbar)}(x, p) \in L^2(T^*\R^d)$ of
the  eigenspace  projections $\Pi_{\hbar, E_N(\hbar)}$ are defined by,
\begin{equation}
 W_{\hbar,E_N(\hbar)}(x, \xi) = \int_{\R^d} \Pi_{\hbar, E_N(\hbar)} \left( x+\frac{v}{2}, x-\frac{v}{2} \right) e^{-\frac{i}{\hbar} v \cdot \xi} \frac{dv}{(2\pi h)^d} \label{E:Wignera}.
\end{equation}  
\end{defn}

\noindent When $E_N(\hbar)  = E$, i.e. 
\begin{equation} \label{hNEDEF}\hbar = \hbar_N(E):=\frac{E}{N+\frac{d}{2}},\end{equation} the Wigner distribution $W_{\hbar, E_N(\hbar)}$ of a single eigenspace projection \eqref{E:Wignera} is the `quantization' of the energy surface of energy
$E$ and should therefore be localized at the classical energy level $H(x, \xi)  = E$, where  $H(x, \xi) = \half \sum_{j=1}^d (\xi_j^2 +  x_j^2) $. We denote the (energy) level sets by
\begin{equation} \label{SIGMAEDEF}
 \Sigma_E  =\{(x, \xi) \in T^*\R^d: H(x, \xi): = \half(||x||^2 + ||\xi||^2) = E\}. \end{equation}
The Hamiltonian flow of $H$ is $2 \pi $ periodic, and its orbits form the
complex projective space $\CP^{d-1} \simeq \Sigma_E /\sim$ where $\sim$ is the equivalence relation of belonging to the same Hamilton orbit. Due to this periodicity, the projections \eqref{PiDEF} are semi-classical Fourier integral operators (see \cite{GU12, HZZ15}). This is also true for the Wigner distributions \eqref{E:Wignera}.  Their properties are basically unique to the isotropic oscillator \eqref{Hh} and that is the motivation to single them out in this article\footnote{See  Section \ref{PRIOR} for a more precise statement.}. These properties are visible in Figure \ref{fig-Wigner-eigenspace} depicting the graph of $W_{\hbar, 1/2}$.
% For more in this direction, see Section \ref{FIOSECT}.

\begin{figure}\label{fig-Wigner-eigenspace}
%\vspace{-11pt}
\begin{center}
  \includegraphics[width=.6 \textwidth]{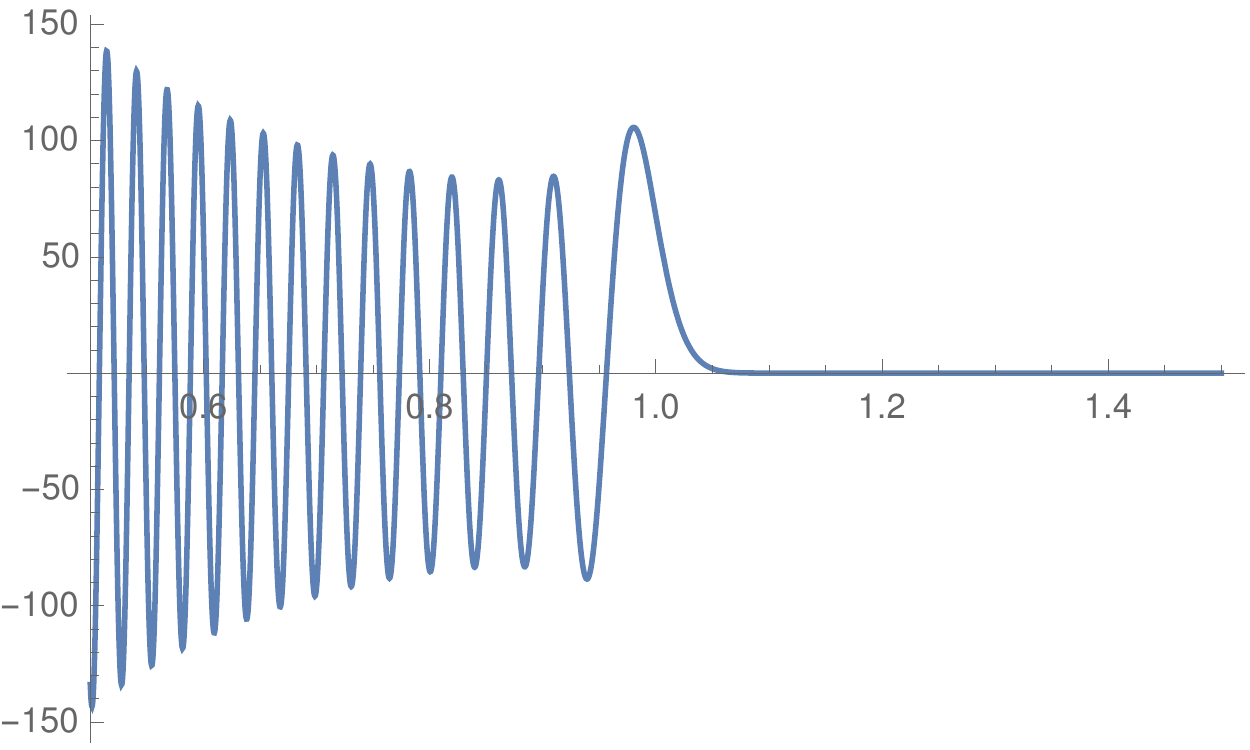} 
\end{center}
\caption{The Wigner function $W_{\hbar, E_N(\hbar)}$ of the eigenspace projection $\Pi_{\hbar, E_N(\hbar)}$ is always radial (see Proposition \ref{WIGNERLAGUERRE}). Displayed above is the graph of $W_{\hbar, E_N(\hbar)}$ as a function of the radial variable when $N,\hbar$ are such that $E_N(\hbar)=\hbar(N+d/2)=1/2$.}
\vspace{-10pt}
\end{figure}

Wigner distributions were introduced in \cite{W32} as phase space densities. Heuristically, the Wigner distribution \eqref{PiDEF}  is a kind of probability density in phase space of finding a particle of energy $E_N(\hbar)$ at the
point $(x, \xi) \in T^* \R^d$. This is not literally true, since $W_{\hbar, E_N(\hbar)}(x, \xi)$ is not positive: it oscillates with  heavy tails inside the energy surface \eqref{SIGMAEDEF}, 
has a kind of transition across $\Sigma_E$ and then
decays  rapidly outside the energy surface.  The purpose of this paper is to give detailed results on  the concentration and oscillation properties of these Wigner distributions in three phase space regimes, depending on
the position of $(x, \xi)$  with respect to $\Sigma_E$. 
The main results of this article are: \begin{enumerate}

\item  Theorem \ref{BESSELPROP}, giving the interior Bessel  asymptotics for $H(x, \xi) < E$; \bigskip

\item Theorem \ref{SCALINGCOR-old}, 
%Theorem \ref{RSCOR} and Theorem \ref{BULKSCALINGCOR})
giving the
{\it interface Airy scaling asymptotics} of the Wigner distributions  \eqref{E:Wignera}
  in tubes of radius
$\hbar^{2/3}$ around $\Sigma_E$, with 
 $(x, \xi)$ either in the interior or exterior of the energy ball; \bigskip
 
  \item  Proposition \ref{EXTDECAY}, giving the exponential decay rate in the exterior
  of the energy surface. \bigskip\end{enumerate}
 
Of these results, the scaling asymptotics around $\Sigma_E$ is the most novel and is part of a more general program to analyse spectral scaling asymptotics around various types of interfaces. In a subsequent article \cite{HZ19A}, we consider Wigner distributions of more general spectral projections for `windows' of eigenvalues of varying widths and centers. When one sums over large enough windows, the scaling asymptotics should be universal, and that is the subject of \cite{HZ19B}.  Wigner distributions of individual eigenspace projections of the isotropic harmonic oscillator have special asymptotics and are not universal. For  general  \Sch operarators, whose classical Hamiltonian flows are non-periodic and whose
eigenspaces are not of `maximal multiplicity',  the generalization of  individual eigenspace projections of \eqref{Hh}  is  spectral projections for a window of width $\hbar$ around an energy level. In general, even for generic Harmonic oscillators, one would get an asymptotic expansion in terms of periodic orbits. Since all orbits of the classical isotropic oscillator are periodic, the asymptotics may be stated without reference to periodic orbits.

\subsection{Statement of results}

\begin{figure}\label{fig-Wigner-eigenspace}
%\vspace{-11pt}
\begin{center}
  \includegraphics[width=.6 \textwidth]{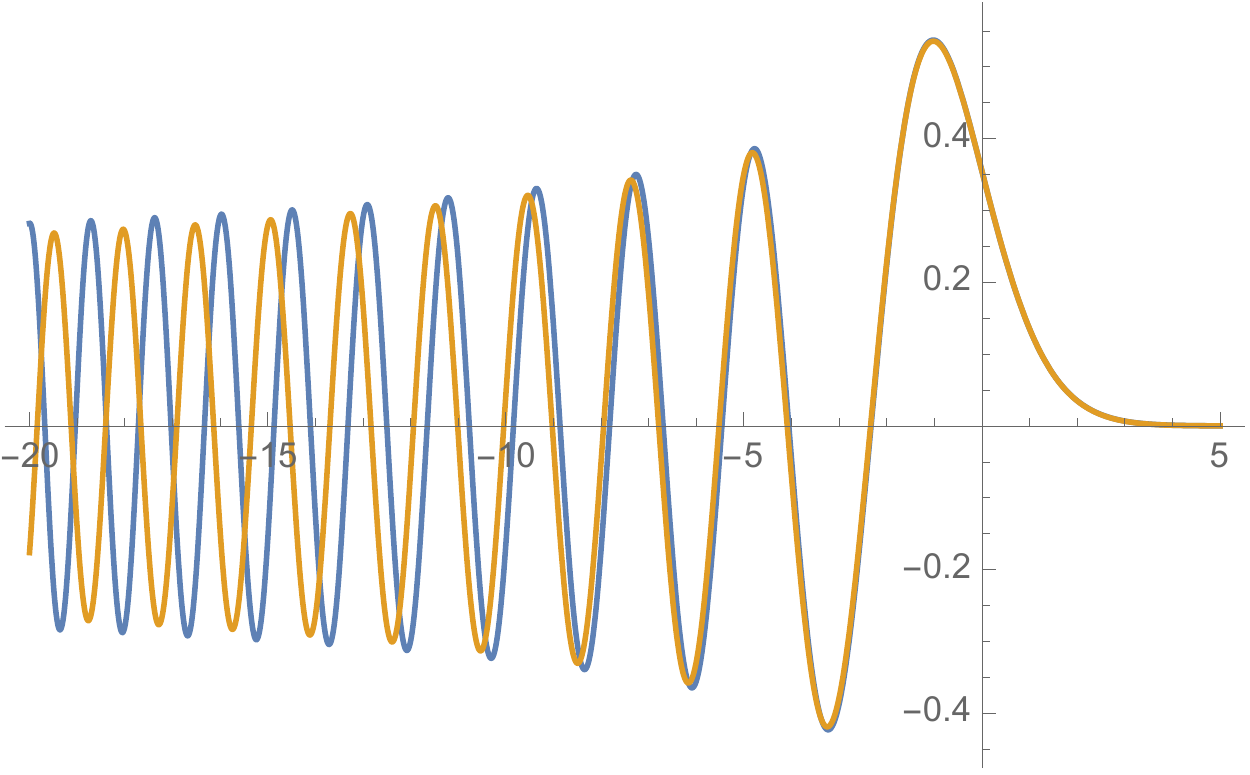} 
\end{center}
\caption{The Wigner function $W_{\hbar, E_N(\hbar)}$ of the eigenspace projection $\Pi_{\hbar, E_N(\hbar)}$ is always radial (see Proposition \ref{WIGNERLAGUERRE}). Displayed above is the graph of the Airy function (orange) and of $W_{\hbar, E_N(\hbar)}$ with $N=500$ (blue) as a function of the rescaled radial variable $\rho$ in a $\hbar^{2/3}$ tube around the energy surface $H(x,\xi)=E_N(\hbar)=1/2.$ Theorem \ref{SCALINGCOR-old} predicts that, when properly scaled, $W_{\hbar, E_N(\hbar)}$ should converge to the Airy function (with the rate of convergence being slower farther from the energy surface, which is defined here by $\rho=0$).}
\vspace{-10pt}
\end{figure}
%\begin{figure}\label{fig2}

%\begin{center} 
%\includegraphics[scale=0.3]{Erf.png}  
%
%\end{center}
%\caption{Erf interface in the Bargmann-Fock representation}
%\end{figure}

%\subsection{Isotropic Harmonic oscillators and their scalings}

 The first result is an exact formula for the Wigner distributions \eqref{E:Wignera} of the eigenspace projections for the isotropic Harmonic oscillator in terms of Laguerre functions (see Appendix \ref{S:Laguerre}
 and \cite{T} for background on Laguerre functions).

\begin{proposition} \label{WIGNERLAGUERRE} The Wigner distribution of 
Definition \ref{WIGNERPROJDEF} 
% for the spectral projection \eqref{E:Wignera} of the  $d$-dimensional %isotropic harmonic oscillator
 is given by,
\begin{equation}    \label{E:Wigner-sp}
    W_{\hbar, E_N(\hbar)}(x, \xi) =  \frac{(-1)^N}{(\pi \hbar)^d}
    e^{-  2H/\hbar}  L^{(d-1)}_N(4H/\hbar),\qquad H=H(x,\xi)=\frac{\abs{x}^2+\abs{\xi}^2}{2},
\end{equation}
where $L_N^{(d-1)}$ is the associated Laguerre polynomial of degree $N$ and type $d-1$.% and where $\hbar$ is defined by \eqref{hNdef}. 
\end{proposition}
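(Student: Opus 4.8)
The plan is to reduce the computation of $W_{\hbar,E_N(\hbar)}$ to the single case of one eigenfunction via the known spectral decomposition of $\Pi_{\hbar,E_N(\hbar)}$, and then to invoke the classical fact that the Wigner transform of a Hermite function is a Laguerre function times a Gaussian. Concretely, I would first rescale to $\hbar = 1$: the dilation $x \mapsto \hbar^{1/2} x$ conjugates $\widehat H_\hbar$ to $\hbar \widehat H_1$, so $\Pi_{\hbar, E_N(\hbar)}$ is the pullback of $\Pi_{1, N+d/2}$ under this dilation, and one tracks how the Wigner transform \eqref{E:Wignera} transforms under it (this produces the $(\pi\hbar)^{-d}$ prefactor and the argument $4H/\hbar$). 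So it suffices to prove the formula for $\hbar=1$.

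For $\hbar = 1$, write $\Pi_{1, N+d/2} = \sum_{|\alpha| = N} \Phi_\alpha \otimes \overline{\Phi_\alpha}$ where $\Phi_\alpha(x) = \prod_j h_{\alpha_j}(x_j)$ are the tensor-product Hermite eigenfunctions, with $h_n$ the $L^2$-normalized Hermite functions on $\R$. The Wigner transform is additive over this sum, and it factors over the one-dimensional coordinates: $W[\Phi_\alpha](x,\xi) = \prod_{j=1}^d W[h_{\alpha_j}](x_j, \xi_j)$. Next I would recall (or quickly rederive from the generating function of the Hermite polynomials) the one-dimensional identity $W[h_n](x,\xi) = \frac{(-1)^n}{\pi} e^{-(x^2+\xi^2)} L_n^{(0)}(2(x^2+\xi^2))$. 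Substituting and using $\sum_j \alpha_j = N$, the Gaussian factors multiply to $\pi^{-d} e^{-2H}$ with a global sign $(-1)^N$, and it remains to identify
$$
\sum_{|\alpha| = N} \prod_{j=1}^d L_{\alpha_j}^{(0)}(t_j), \qquad t_j = 2(x_j^2 + \xi_j^2),
$$
evaluated on the diagonal $t_1 = \cdots = t_d = 4H/d$… but in fact what is needed is only the radial sum, which is handled by the Laguerre convolution (Vandermonde-type) identity $\sum_{a+b=n} L_a^{(\mu)}(s) L_b^{(\nu)}(t) = L_n^{(\mu+\nu+1)}(s+t)$ applied inductively in the $d$ coordinates. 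This collapses the $d$-fold sum to $L_N^{(d-1)}\!\left(\sum_j 2(x_j^2+\xi_j^2)\right) = L_N^{(d-1)}(4H)$, which is exactly the claimed formula at $\hbar = 1$.

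The main obstacle is bookkeeping rather than conceptual: one must be careful that the Laguerre addition formula I invoke is the correct normalization (it holds for the standard $L_n^{(\alpha)}$ with the convolution identity raising the type by $1$ at each step, so starting from $d$ copies of type $0$ one lands at type $d-1$ after $d-1$ applications), and that the various $2\pi$ and $\hbar$ factors in Definition \ref{WIGNERPROJDEF} are propagated correctly through the rescaling. A clean alternative that sidesteps the combinatorial identity is to use the known formula for the full Schr\"odinger propagator $\exp[-\tfrac{it}{\hbar}\widehat H_\hbar]$ (Mehler's formula), extract $\Pi_{\hbar,E_N(\hbar)}$ as the Fourier coefficient $\frac{1}{2\pi}\int_0^{2\pi} e^{\frac{it}{\hbar}E_N(\hbar)} \exp[-\tfrac{it}{\hbar}\widehat H_\hbar]\, dt$, and then take the Wigner transform of the resulting Gaussian kernel, recognizing the $t$-integral as the Laguerre generating function $\sum_N L_N^{(d-1)}(u) z^N = (1-z)^{-d} e^{-uz/(1-z)}$. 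I would present the Hermite/Laguerre-convolution route as the primary argument since it is the most transparent, and remark on the Mehler route as a cross-check.
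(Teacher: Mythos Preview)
Your proposal is correct. Your primary route---decompose $\Pi_{\hbar,E_N(\hbar)}$ into tensor-product Hermite eigenfunctions, use the one-dimensional identity $W[h_n](x,\xi) = \frac{(-1)^n}{\pi} e^{-(x^2+\xi^2)} L_n^{(0)}(2(x^2+\xi^2))$, and collapse the sum over $|\alpha|=N$ via the Laguerre addition formula $\sum_{a+b=n} L_a^{(\mu)}(s)L_b^{(\nu)}(t)=L_n^{(\mu+\nu+1)}(s+t)$---is a genuinely different argument from the paper's. The paper instead follows precisely what you list as your ``alternative'': it first computes the Wigner transform of the full propagator $U_\hbar(t)$ (Proposition~\ref{ucalprop-new}, via Mehler's formula and the tensor-product reduction to $d=1$), and then matches the Fourier expansion $\sum_N r^{N+d/2} W_{\hbar,E_N(\hbar)}$ against the Laguerre generating function $(1-r)^{-d}e^{-ur/(1-r)}$ to read off each $W_{\hbar,E_N(\hbar)}$.

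Your approach is more self-contained for this single proposition, since the convolution identity lets you work eigenspace-by-eigenspace without ever introducing the propagator. The paper's route has the advantage that the formula for $\ucal_\hbar(t,x,\xi)$ is itself a central tool reused throughout the article (in the semiclassical proofs of Theorems~\ref{SCALINGCOR-old} and~\ref{BESSELPROP} and in Lemma~\ref{LOCLEM}), so computing it first and extracting the Laguerre formula as a corollary is economical in the larger context. Both proofs rest on the same one-dimensional reduction and ultimately on the same Laguerre generating function---your convolution identity is simply its Cauchy-product restatement---so the distinction is one of organization rather than depth. The momentary aside about ``the diagonal $t_1=\cdots=t_d=4H/d$'' is unnecessary: the convolution identity already delivers $L_N^{(d-1)}(\sum_j t_j)=L_N^{(d-1)}(4H)$ with no diagonal restriction.
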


In dimension $d =1$ the formula was proved in \cite{O,JZ}. The authors did not find the formula in the literature for  $d > 1$, but essentially the same formula is proved in \cite[Theorem 1.3.5]{T} for matrix elements of the Heisenberg group; by \cite[(1.89)]{F} the latter are related to the Wigner distributions by the change of variables $(x, \xi) \to (-2\xi, 2 x)$ and a multiplication by $2^d$. We follow in Section \ref{S:Wigner-Laguerre-Pf} the technique in \cite{T} to give a brief derivation of Proposition \ref{WIGNERLAGUERRE} as a corollary of Proposition \ref{ucalprop-new} below. A proof that $W_{\hbar, E_N(\hbar)}$
is a radial function on $T^* \R^n$ using its symmetry properties and no special calculations is given in Section \ref{RADIALSECT}.

The second result is a weak* limit result for normalized Wigner distributions.

\begin{prop}  \label{OPWa} Let  $a_0$ be a semi-classical symbol of order zero and let $Op_h^w(a)$ be its Weyl quantization.  Then, as $\hbar \to 0$, with
$E_N(\hbar) \to E$,
$$\frac{1}{\dim V_{\hbar, E_N(\hbar)}} \int_{T^* \R^d} a_0(x,\xi)  W_{\hbar, E_N(\hbar)}(x,\xi) dx d \xi\quad \to\quad \dashint_{\Sigma_E} a_0 d \mu_E,$$
where $d\mu_E$ is Liouville measure on $\Sigma_E$ and $ \dashint_{\Sigma_E} a_0 d \mu_E = \frac{1}{\mu_E(\Sigma_E)}  \int_{\Sigma_E} a_0 d \mu_E$.
\end{prop}
 Thus,  $W_{\hbar, E_N(\hbar)}(x,\xi) \to \delta_{\Sigma_E}$ in the sense of weak* convergence. But this limit is due to the oscillations inside the energy ball; the  pointwise asymptotics are far more complicated. The proof is given in Section \ref{PROOFPROPOPWa} as a corollary of the pointwise asymptotics of the Wigner function (or, more precisely, its proof).

 \subsection{Interface asymptotics for Wigner distributions of  individual eigenspace projections}

Our first main result gives the scaling asymptotics for the Wigner function $W_{\hbar, E_N(\hbar)}(x,\xi)$ of the projection onto the $E$-eigenspace of $\widehat{H}_{\hbar}$ when $(x,\xi)$ lies in an $\hbar^{2/3}$ neighborhood of the energy surface $\Sigma_E.$
\begin{theo}\label{SCALINGCOR-old} 
Fix $E>0,d\geq 1,\epsilon>0$. Assume $E_N(\hbar)  = E$ and let $\hbar = \hbar_N(E),$ as in \eqref{hNEDEF}. Suppose $(x,\xi)\in T^*\R^d$ satisfies
\begin{equation} \label{udef}
H(x,\xi)=E + u \lr{\frac{\hbar}{2E}}^{2/3},\qquad u \in \R,\, H(x,\xi)=\frac{\norm{x}^2+\norm{\xi}^2}{2}
\end{equation}
with $\abs{u}<\hbar^{(-1+\epsilon)/3}$. Then, 
\begin{equation}
  \label{E:W-scaling}
 W_{\hbar, E_N(\hbar)}(x, \xi) = \frac{2}{(2\pi \hbar)^d} \lr{\frac{\hbar}{2E}}^{1/3} \lr{\Ai(u/E) ~+~ O(\hbar^{1-\epsilon}\abs{u}^{3})}.
\end{equation}
When $u>0,$ we also have the estimate
\[W_{\hbar, E_N(\hbar)}(x, \xi) =\frac{2}{(2\pi \hbar)^d} \lr{\frac{\hbar}{2E}}^{1/3} \Ai(u/E)\lr{1 + O\lr{(1+\abs{u})^{3/2}\abs{u}\hbar^{2/3}}}.\]
%More generally, let $\widehat{f}\in C_c^\infty((-\pi,\pi))$ with
%$\widehat{f}(0)\neq 0.$ Then for any $\epsilon>0,$
%\begin{equation} \label{P:smooth-Weyl} 
%%\int_\R \widehat{f}(t) e^{itE/\hbar}W_{t,\hbar}(x,p)
%W_{\hbar, f, E}(x, \xi) =\hbar^{-d+1/3}C_d\Ai(u)\lr{\widehat{f}%(0)+O_{\epsilon}(\hbar^{2/3-\epsilon})}.\end{equation}
%Here, $C_d =\frac{2}{(2E)^{1/3}(2\pi)^d}.$
\end{theo}

\noindent Here, $\Ai(x)$ is the Airy function. The Airy scaling of $W_{\hbar, E_N(\hbar)}$ is illustrated in Figure 2. We give two proofs of Theorem \ref{SCALINGCOR-old}. The first is based on special functions; the asymptotics are obtained from Proposition  \ref{WIGNERLAGUERRE} and results on Laguerre functions due to Olver \cite{O}, Franzen-Wong \cite{FW} (see Proposition \ref{FW}). The second proof is self-contained and uses semi-classical parametrices  for $W_{\hbar,E_N(\hbar)}$ (see Section \ref{SCSCORSECT}).

The assumption \eqref{udef} may be stated more invariantly that $(x, \xi)$ lies in the tube of radius $O(\hbar^{2/3})$ around $\Sigma_E$ defined by the gradient flow of $H$ with respect to the Euclidean metric on $T^*\R^d$. The asymptotics are illustrated in Figure 2. Due to the behavior of the Airy function $\Ai(s)$, these formulae show that in the semi-classical limit $\hbar \to 0$, $E_N(\hbar) \to E$, $W_{\hbar, E_N(\hbar)}(x, \xi)$  concentrates on the energy surface surface $\Sigma_E$, is oscillatory inside the energy ball $\{H \leq E\}$ and is exponentially decaying outside the ball. There is a more complicated but more complete result which requires some notation $B(s)$ and $\alpha_0(s)$  defined in Section \ref{FW}; we defer the statement since it is too lengthy to define all the notation  here.

\subsection{\label{BESSELSECTINTRO} Interior Bessel and Trigonometric asymptotics}
In addition to the Airy asymptotics in an $\hbar^{2/3}$-tube around $\Sigma_E$, classical asymptotics of Laguerre functions (see \cite{AT15, AT15b} for recent results and references) show that when $H(x,\xi)\approx \hbar^2$ (i.e. $(x,\xi)$ is a dsitance at most $\hbar$ form the origin in phase space), $W_{\hbar, E_N(\hbar)}(x,\xi)$ exhibits Bessel asymptotics and trigonometric asymptotics farther into the interior of $\Sigma_E$ (i.e. when $H(x,\xi)\gg \hbar^2$). We record the precise asymptotic statements in Theorem \ref{BESSELPROP} below. To state it we define for $t \in [0, 1)$, 
$$A(t) = \half [\sqrt{t -t^2} + \sin^{-1} \sqrt{t}], \; t \in [0,1].$$
For $t <0$ the $\sin^{-1}$ is replaced by $\sinh^{-1}$ and the $\half$ by
$i/2$. The function $A(t)$ appears in the uniform asymptotic expansions of Laguerre polynomials $L_n^{(\alpha)}$ (see \cite[(2.7)]{FW}). Also, let $J_{d-1}$ be the Bessel function (of the first kind) of index $d-1$.

 \begin{theo} \label{BESSELPROP} Fix $E>0$ and suppose $E_N(\hbar)=E.$ For each $(x,\xi)\in T^*\R^d$ write
\[H_E := \frac{H(x,\xi)}{E}=\frac{\norm{x}^2+\norm{\xi}^2}{2E},\qquad \nu_E:=\frac{4E}{\hbar}.\]
Fix $0<a < 1/2.$ Uniformly over $a\leq H_E \leq 1-a$, there is an asymptotic expansion,
\begin{equation}
W_{\hbar, E_N(\hbar)}(x, \xi)=\frac{2}{(2\pi\hbar)^d}\left[\frac{J_{d-1}(\nu_E A(H_E))}{A(H_E)^{d-1}}\alpha_0(H_E)+O\lr{ \nu_E^{-1}\abs{\frac{J_{d}(\nu_E A(H_E))}{A(H_E)^{d}}}}\right].\label{E:Bessel-asymp}
\end{equation}
In particular, uniformly over $H_E$ in a compact subset of $(0,1),$ we find
\begin{equation}
W_{\hbar, E_N(\hbar)}(x, \xi) = (2\pi\hbar)^{-d+1/2} P_{H,E}\cos\lr{\xi_{\hbar, E,H}}+O\lr{\hbar^{-d+3/2}},\label{E:interior-cosine}
\end{equation}
where we've set
\[ \xi_{\hbar, E,H} =-\frac{\pi}{4} -\frac{2H}{\hbar}\lr{H_E^{-1}-1}^{1/2}+\frac{2E}{\hbar}\cos^{-1}\lr{H_E^{1/2}}\]
and 
\[P_{E,H}:=\lr{\pi E^{1/2}\lr{H_E^{-1}-1}^{1/4}\lr{H_E}^{d/2}}^{-1}.\]
\end{theo}

\noindent We prove Theorem \ref{BESSELPROP} in Section \ref{BESSELSECT}. We also give an independent semiclassical proof of \eqref{E:interior-cosine} in Section \ref{S:sc-cosine-pf}. Note that $A(t)\sim \sqrt{t}$ when $t$ is near $0.$ Hence, although Theorem \ref{BESSELPROP} does not strictly apply when $(x,\xi)$ is a distance $\hbar$ form the origin, the relation \eqref{E:Bessel-asymp} suggests that if the distance from $(x,\xi)$ to the origin is on the order $\hbar$, the Wigner function $W_{\hbar, E_N(\hbar)}(x, \xi)$ behaves like a normalized version of $J_{d-1}(t)/t^{d-1},$ up to constant factors (this fact can also be see by directly setting $H(x,\xi)=\hbar^2\rho$ in Proposition \ref{WIGNERLAGUERRE}). The small ball behavior of $W_{\hbar, E_N(\hbar)}(x,\xi)$ is investigated in Sections \ref{S:small-ball-ints} and \ref{S:sup} below (see also Figure 3).

\subsection{Small ball integrals }\label{S:small-ball-ints}

The interior Bessel asymptotics do not encompass the behavior of $W_{\hbar, E_N(\hbar)}$ in shrinking balls around $\rho = 0$. In that
case, we have,
\begin{prop} \label{SMBALLS} For $\epsilon>0$ sufficiently small and for
any $a(x, \xi) \in C_b(T^*\R^d)$,
   \begin{equation}
\int_{T^*\R^d} a(x, \xi) 
W_{\hbar, E_N(\hbar)}(x,\xi) \psi_{\epsilon,\hbar}(x,\xi)d x d \xi= O(\hbar^{\frac{1-d}{2}-2d\epsilon}\norm{a}_{L^\infty(B_0(\hbar^{1/2-\epsilon}))}).
\label{E:localized}
\end{equation}
where $\psi_{\epsilon,\hbar}$ is a smooth radial cut-off that is identically $1$ on the ball of radius $\hbar^{1/2-\epsilon}$ and is identically $0$ outside the ball of radius $2\hbar^{1/2-\epsilon}.$ 
\end{prop}

\subsection{Exterior asymptotics} If $E_N(\hbar) \to E$, then $W_{\hbar, E_N(\hbar)}(x, \xi)$ concentrates on $\Sigma_E$ and is exponentially decaying in the complement $H=H(x,\xi)> E$. The precise statement is,
\begin{prop}\label{EXTDECAY} Suppose that $H_E=H(x, \xi)/E>1$ and let $E_N(\hbar)  =E$. Then, there exists $C_1>0$ so that
\[|W_{\hbar, E_N(\hbar)}(x, \xi)| \leq C_1 \hbar^{-d + \frac{1}{2}} 
e^{- \frac{2 E}{\hbar}[\sqrt{H_E^2 -H_E} - \cosh^{-1}\sqrt{H_E} ]}. \]
Moreover, as $H(x,\xi) \to \infty$, there exists $C_2>0$ so that
$$|W_{\hbar, E_N(\hbar)}(x, \xi)| \leq C_2 \hbar^{-d + \frac{1}{2}} e^{- \frac{2H(x,\xi)}{\hbar}}. $$
\end{prop}
\noindent This result is proved in Section \ref{EXTDECAYSECT} as a simple consequence of classical asymptotics of Laguerre functions reviewed in Section \ref{Laguerre-Scaling-Proof}.

 \subsection{\label{SUPSECT} Supremum at $\rho =0$} \label{S:sup}
 
 The origin $\rho = 0$ is the point at which $W_{\hbar, E_N(\hbar)}$ has its global maximum (see Figure \ref{fig-Wigner-0}).  We have,
   
\begin{equation} \label{BUP} \begin{array}{lll} W_{\hbar, E_N(\hbar)}(0,0) & = &
%  \int_{\R^d} \Pi_{\hbar, E_N(\hbar)}(\frac{v}{2}, - \frac{v}{2}) dv = \rm{Tr} \sigma \Pi_{\hbar, E_N(\hbar)} \\&&\\&&=
 \frac{(-1)^N}{(\pi \hbar)^d}
      L^{d-1}_N(0) =  \frac{(-1)^N}{(\pi \hbar)^d}  \frac{\Gamma(N + d)}{\Gamma(N +1) \Gamma(d)}  \simeq \frac{(-1)^N}{\pi ^d}  C_d \hbar^{-d} N^{d-1}.\end{array} \end{equation}
%where $\sigma(x) = -x. $ 
The last statement follows from the explicit formula
$   L^{(d-1)}_N(0) =  \frac{\Gamma(N + d)}{\Gamma(N +1) \Gamma(d)} = \frac{(N + d-1)!}{N! (d-1)!}$ (see e.g. \cite[(1.1.39)]{T}).

The fact that $0$ is a local maximum can be seen from the eigenvalue
equation \eqref{WIGNEREIG3} below; since it is radial function, a negative
value of its Laplacian at $0$ is equivalent to its Hessian being negative definite. The fact that it is a global maximum follows from  Proposition \ref{WIGNERLAGUERRE} and the known properties of Laguerre polynomials.  We briefly discuss why the maximum is so large in Section \ref{ZERO1}.

\begin{figure}\label{fig-Wigner-0}
%\vspace{-11pt}
\begin{center}
  \includegraphics[width=.6 \textwidth]{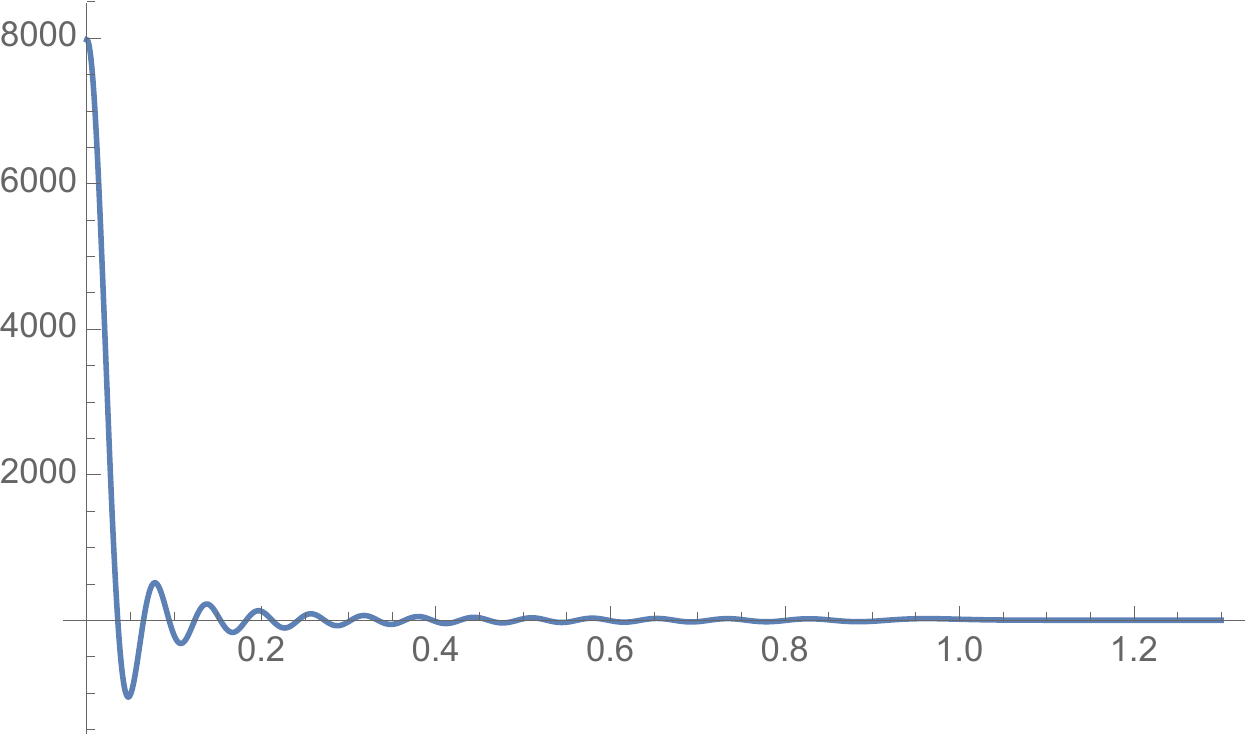} 
\end{center}
\caption{The Wigner function $W_{\hbar, E_N(\hbar)}$ of the eigenspace projection $\Pi_{\hbar, E_N(\hbar)}$ is always radial (see Proposition \ref{WIGNERLAGUERRE}). Displayed above is the blow-up of the Wigner function at $(0,0)$.}
\vspace{-10pt}
\end{figure}

On the complement of the ball  $B(0, \hbar^{\half - \epsilon})$, the Wigner
distribution is much smaller than at its maximum. The following is proved
by combining the estimates of Theorem \ref{SCALINGCOR-old} , Theorem \ref{BESSELPROP} and Proposition
\ref{EXTDECAY}.

\begin{prop} \label{WIGBOUND} For any $\epsilon > 0$,
$$\sup_{(x, \xi): H(x, \xi) \geq \epsilon} |W_{\hbar, E_N(\hbar)}(x, \xi)| \leq C \hbar^{-d + \frac{1}{3}}. $$
The supremum in this region  is achieved in $\{H \leq E\}$ at $(x,\xi)$
satisfying \eqref{udef} where $u$ is the global maximum of $\mathrm{Ai}(x)$.
\end{prop}

\subsection{\label{Outline}Outline of proofs } 

The main object in the  study of the eigenspace projections of Definition \ref{WIGNERPROJDEF} is the Wigner function of the propagator.  By `propagator' is meant the solution operator of the Cauchy problem for the  \Sch equation
$$i \hbar \frac{\partial}{\partial t} u = \widehat{H}_{\hbar} u,$$
and its Schwartz kernel is  thus given by
\begin{equation} \label{PROPDEF} 
U_h(t, x,y) =e^{-\ihbar t \widehat{H}_h}(x,y). \end{equation}
 The special feature of the isotropic oscillator is that we may express individual eigenspace projections as Fourier components of the Propagator $U_{\hbar}(t)= e^{- i \frac{t}{\hbar} \hat{H}_{\hbar}}$. On the level of Schwartz kernels,
 \begin{equation*}
   \Pi_{h, E_N(\hbar)}(x,y)=\int_{-\pi}^{\pi} U_\hbar(t-i\epsilon,x,y) e^{-\ihbar
(t-i\epsilon) E_N(\hbar)} \frac{dt}{2\pi},
 \end{equation*}
where we recall that $E_N(\hbar)=\hbar(N+d/2).$ The Wigner transform is linear on the level of Schwartz kernels and hence 
\begin{equation}
W_{\hbar, E_N(\hbar)}=\int_{-\pi}^{\pi} \ucal_\hbar(t-i\epsilon,x,y) e^{-\ihbar
(t-i\epsilon) E_N(\hbar)} \frac{dt}{2\pi},\label{PROJUT}
\end{equation}

where $\ucal_\hbar$ is the Wigner function of the propagator. 
%\begin{equation}
%\label{PROJUT}\left\{ \begin{array}{ll}
%\Pi_{N}(x,y)&=\int_{-\pi}^{\pi} U(t-i\epsilon,x,y) e^{-i
%(t-i\epsilon) N} \frac{dt}{2\pi}, \\& \\
%\Pi_{h, E_N(\hbar)}(x,y)&=\int_{-\pi}^{\pi} U_h(t-i\epsilon,x,y) e^{-i
%(t-i\epsilon) N} \frac{dt}{2\pi} \\& \\ 
%&=\int_{-\pi}^{\pi} U_h(t-i\epsilon,x,y) e^{-\ihbar
%(t-i\epsilon) E_N(\hbar)} \frac{dt}{2\pi},
%\end{array} \right. \end{equation}

\begin{proposition} \label{ucalprop-new} The Wigner distribution of $U_h(t, x, y)$ is given by $$
 \ucal_{\hbar}(t, x, \xi) = \frac{1}{(2\pi\hbar)^d}  \cos( t/2)^{-d}
 \exp\lr{-\frac{2iH}{\hbar} \frac{\sin (t/2)}{\cos (t/2)} },\qquad H = H(x,\xi):=\frac{||x||^2 + ||\xi||^2}{2}$$
 \end{proposition}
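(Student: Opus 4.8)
The plan is to compute the Wigner transform of the Mehler kernel directly. Recall that the Schwartz kernel of $U_\hbar(t) = e^{-\frac{it}{\hbar}\widehat{H}_\hbar}$ is given by Mehler's formula, which in the isotropic $d$-dimensional case factors over coordinates as a product of one-dimensional Mehler kernels. Thus
\[
U_\hbar(t,x,y) = \frac{1}{(2\pi i \hbar \sin t)^{d/2}} \exp\left( \frac{i}{\hbar}\, \frac{(\|x\|^2 + \|y\|^2)\cos t - 2 x\cdot y}{2\sin t} \right),
\]
with the appropriate branch of the square root. First I would substitute $x \mapsto x + \frac{v}{2}$, $y \mapsto x - \frac{v}{2}$ into this formula and expand the quadratic form in the exponent in terms of $x$ and $v$: the cross terms organize so that the phase becomes (a multiple of) $\|x\|^2 \tan(t/2)$ plus a term linear-quadratic in $v$, using the half-angle identities $\frac{\cos t - 1}{\sin t} = -\tan(t/2)$ and $\frac{\cos t + 1}{\sin t} = \cot(t/2)$. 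Concretely one finds
\[
\frac{(\|x+\tfrac v2\|^2 + \|x-\tfrac v2\|^2)\cos t - 2(x+\tfrac v2)\cdot(x-\tfrac v2)}{2\sin t}
= \|x\|^2 \tan(t/2) \;-\; \frac{\|v\|^2}{8}\cot(t/2) \; + \; \text{(no $x\cdot v$ term)}.
\]

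Next I would carry out the $v$-integral in the definition \eqref{E:Wignera}: after the above reduction, the integrand is a Gaussian in $v$,
\[
\ucal_\hbar(t,x,\xi) = \frac{1}{(2\pi i\hbar \sin t)^{d/2}} e^{\frac{i}{\hbar}\|x\|^2 \tan(t/2)} \int_{\R^d} e^{-\frac{i}{\hbar}\frac{\|v\|^2}{8}\cot(t/2)} e^{-\frac{i}{\hbar} v\cdot \xi}\, \frac{dv}{(2\pi\hbar)^d}.
\]
Completing the square in $v$ and evaluating the (oscillatory) Gaussian integral by analytic continuation from the convergent case $t \to t - i\epsilon$ produces a factor proportional to $(\sin t)^{d/2}(\cos(t/2))^{-d}$ — which combines with the prefactor to cancel the $(\sin t)^{d/2}$ and leave $\cos(t/2)^{-d}$ up to constants — and an exponential $\exp\!\big(\tfrac{2i}{\hbar}\|\xi\|^2 \tan(t/2)\big)$ from the square-completion. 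Adding the surviving $x$-phase $e^{\frac{i}{\hbar}\|x\|^2\tan(t/2)}$ (note a factor of $2$ bookkeeping against $H = \tfrac12(\|x\|^2+\|\xi\|^2)$) gives the claimed $\exp\!\big(-\tfrac{2iH}{\hbar}\tan(t/2)\big)$; the sign and the precise normalization $(2\pi\hbar)^{-d}$ need careful tracking through the Fresnel integral constants.

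The main obstacle is exactly this constant- and branch-of-the-square-root bookkeeping: one must verify that all powers of $2\pi$, $i$, $\hbar$, and $\sin t$ combine to give precisely $\frac{1}{(2\pi\hbar)^d}\cos(t/2)^{-d}$ with no leftover phase, and that the branch conventions in Mehler's formula are compatible with the stationary-phase/Fresnel evaluation of the $v$-integral. A clean way to sidestep some of this — and the route I would actually take — is to reduce to $d=1$ (the whole computation factors over coordinates since $\widehat H_\hbar$, the Wigner transform, and $H(x,\xi)$ all split as sums/products over $j$), check the $d=1$ formula against the known one-dimensional result, and then take the $d$-th power; this reduces the constant-chasing to a single clean one-dimensional Fresnel integral. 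As a sanity check one can verify $\ucal_\hbar(0,x,\xi) = (2\pi\hbar)^{-d}$, consistent with $U_\hbar(0) = \mathrm{Id}$, and confirm that the $t$-periodicity structure ($t \mapsto t + 2\pi$ up to the Maslov-type sign absorbed in $\cos(t/2)^{-d}$) matches the $2\pi$-periodicity of the classical flow that drives the whole paper. Finally, substituting this formula into \eqref{PROJUT} and expanding $\cos(t/2)^{-d}$ and the exponential in a suitable contour integral yields Proposition \ref{WIGNERLAGUERRE} as the stated corollary.
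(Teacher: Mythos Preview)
Your approach is correct and is essentially the one the paper takes: reduce to $d=1$ by the tensor-product structure, plug Mehler into the Wigner integral, and evaluate the resulting Gaussian Fourier transform in $v$. The only cosmetic difference is that the paper carries out the $d=1$ computation in the variable $r=e^{-it}$ (following Thangavelu), writing Mehler as $\pi^{-1/2}(1-r^2)^{-1/2}\exp\big[-\tfrac{1+r}{1-r}\tfrac{v^2}{4}-\tfrac{1-r}{1+r}x^2\big]$, which makes the Gaussian in $v$ manifestly real and avoids the branch/sign bookkeeping you flagged; your displayed identity has minor sign and coefficient slips (it should read $-\|x\|^2\tan(t/2)+\tfrac{\|v\|^2}{4}\cot(t/2)$), but you already note these constants need tracking, and the $r$-substitution is exactly the clean way to do it.
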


\noindent  Proposition \ref{ucalprop-new} is the basic tool underlying the scaling asymptotics of Wigner functions of eigenspace projections. The Wigner distribution is well-defined as a distribution (see Section \ref{WIGasDIST}) but not as a locally $L^1$ function. $\ucal_{\hbar}(t, x, \xi)$ is a distribution in $t$ for fixed $(x, \xi)$ with  Dirac mass singularities at $t = \pi + 4 \pi \Z$.  Despite the singularities it has scaling asymptotics in different phase space regimes. Combined with \eqref{PROJUT} it gives
scaling asymptotics of $W_{\hbar, E_N(\hbar)}(x, \xi). $

\subsection{\label{PRIOR}Prior and related results}

The literature on Wigner distributions and on the isotropic Harmonic Oscillator
is vast. In dimension one, some of the results of this article  are proved in \cite{JZ}, and others are stated and to some extent proved in 
\cite{Ber91}, along with more detailed asymptotics near interfaces.  Somewhat surprisingly, the results on Wigner distributions of spectral projections have not previously been generalized to higher dimensions $d \geq 2$, even for the model case \eqref{Hh} of the isotropic harmonic oscillator. However, there do exist known relations between Wigner functions
and Laguerre functions  \cite{T,F} and we explain how to use  known asymptotics of Laguerre
functions (from \cite{FW}) to obtain interface asymptotics  results on Wigner functions. But our ultimate aim is to generalize the results to more general \Sch operators, and therefore we also give semi-classical analysis arguments .

It has long been known in dimension $d  = 1$ \cite{Ber}  that Wigner distribution of spectral
projections (corresponding to individual eigenfunctions) of quite general \Sch operators exhibit Airy scaling asymptotics in $\hbar^{2/3}$ shells around the corresponding energy
 curve. In dimension $d =1$, Berry \cite{Ber} obtained the expression for the Wigner function
of an eigenfunction of any Hamiltonian, 
$$W_{\hbar}(q,p) = C \frac{A(q,p)^{1/6} \; \mathrm{Ai} \left(- \left[ \frac{3 A(q,p)}{2 \hbar} \right]^{2/3} \right)}{\pi \hbar^{\frac{2}{3}} \left[I_q(2) I_p(1) - I_p(2) I_1(q)\right]^{\half}}.$$
Here, $A(q,p)$ is the area between the chord $\overline{qp}$ and the arc
of the boundary between $q,p$. Of course, in  dimension one 
the \Sch operator is an ODE and the techniques available there do not
generalize  to higher dimensions.  See also \cite{O} for many results in the one-dimensional case.

     In \cite{HZZ16}, the authors studied the configuration space spectral
     projections kernels  $\Pi_{\hbar_N, E}$ for a fixed energy level $E$ 
for the isotropic Harmonic oscillator in $\R^d$.  The allowed region in
configuration space is  the projection $\acal_E = \pi(\Sigma_E) \subset \R^d$ where $\pi: T^*\R^d \to \R^d$ is the natural projection. 
The  interface is the
caustic $\ccal_E = \{x \in \R^d: \half ||x||^2 = E\}$, i.e. the projection 
of the points of $\Sigma_E$ where $\xi = 0$. Theorem 1.1 of \cite{HZZ16} gives Airy scaling asymptotics of   $\Pi_{\hbar_N, E}(x,y)$ for $x,y$ in an $\hbar^{2/3}$ neighborhood of $\ccal_E$.  Theorem \ref{SCALINGCOR-old} of the present article is a  lift of this result to phase space.

In \cite{ZZ16}, Zelditch-Zhou studied scaling asymptotics around $\Sigma_E$ of the
so-called Husimi distribution of $\Pi_{\hbar, E_N(\hbar)}$ rather than the Wigner distribution. 
The Husimi distribution is the covariant symbol (value on the diagonal) of the conjugate
$B_{\hbar}^* \Pi_{\hbar, E_N(\hbar)} B_{\hbar}^*$ of the eigenspace projection by the 
Bargmann transform to the holomorphic Bargmann-Fock space. The Husimi distribution 
is the density obtained by holomorphic quantization of the eigenspace projection and is a kind of Gaussian coherent state centered on $\Sigma$ with Gaussian decay in a tube 
of radius $N^{-\half}$ in the normal directions. Thus, the phase space interface scaling around $\Sigma_E$
is very different in the two representations. The exact relation between the two
phase space distributions was given by Cahill-Glauber \cite[(7.32)]{CG69I} (see also \cite[(6.32)] {CG69II} and \cite[(5.32)]{Oz}), who showed
that the Bargmann-Fock (Husimi distribution) is a Gaussian convolution of the Wigner distribution. 

At the beginning of this article, it is stated that the isotropic quantum oscillator \eqref{Hh}
is the unique \Sch operator with its exceptionally high eigenvalue multiplicities (degeneracies). Suppose that    $-h^2 \Delta + |x|^2 + V$ on $\R^d$ is  a \Sch operator
with a potential $V$ of quadratic growth, and suppose that  it has the same   eigenvalue multiplicities as the un-perturbed isotropic oscillator \eqref{Hh}.  Then it is a `maximally degenerate' \Sch operator in the sense of
\cite{Z96}.  As proved in that article, it must have periodic classical Hamiltonian 
flow and then, by results of  Weinstein, Widom and Guillemin on compact manifolds  (see \cite{G78} for background) and of Chazaraint for  \Sch operators on $\R^d$,   \cite{Ch80}, the eigenvalues concentrate in `clusters'  around the arithmetic progression $\hbar (N + \frac{d}{2})$;  a recent article studying perturbations of \eqref{Hh} is \cite{GU12}.  
Maximal degeneracy
means that every cluster has just one distinct eigenvalue. 
% \Sch operators  $-\Delta + V$ on compact rank one symmetric spaces
% with exceptionally narrow clusters were studied by Weinstein, Widom and Guillemin on compact manifolds in the late 1970's (see \cite{G78} for %background).  For \Sch operators on $\R^d$, Chazarain  \cite{Ch80} studied clusters when the Hamiltonian flow is periodic;  a recent article studying perturbations of \eqref{Hh} is \cite{GU12}.  
We are not aware of a proof
 that there exist no `maximally degenerate' perturbations of the semi-classial \Sch operator  \eqref{Hh} but it seems that the techniques of \cite{Z96} might yield the result that \eqref{Hh} is the unique maximally degenerate analytic  \Sch operator on $\R^d$ with a quadratic growth potential. We plan to consider this problem elsewhere and do not discuss it further here.
%The number of eigenvalues in a cluster is a topological invariant. In the %case where every eigenvalue has the exceptionally high multiplicity of %\eqref{Hh}, it  was shown in   \cite{G78} 
%in the setting of compact rank one symmetric spaces  that the perturbing  %potential
%must satsify $\int_{\gamma} V ds = 0$
%for every periodic orbit of the geodesic flow. If the geodesic Radon %transform is injective, it must be that $V \equiv 0$, and this is the case for %all compact one symmetric spaces except for spheres, where odd functions %lie in the kernel of the Radon transform.  By studying the next order of %band-invariants,
%Guillemin and Widom \cite{Wi79} showed that $V= 0$ also in that case.  %Similar results can be proved in  the analogous case of perturbations %$H_{\hbar}+ V$  of \eqref{Hh} by lower order potentials $V$,  by small %modifications of the proofs in \cite{G78,Wi79} using band invariants as in %\cite{GU12}. Using the leading order band invariant, one can show that  %$\int_{\gamma} V ds = 0$ for all periodic orbits of the isotropic oscillator. We omit further details, since our purpose is only 
%to call attention to the very special nature of the Wigner distributions of %Definition \ref{E:Wignera}. 

\section{Background}

It is well-known that the spectrum of $\widehat{H}_{\hbar}$ in the isotropic case consists of the eigenvalues \eqref{ENh}, and  one has the spectral decomposition 
\begin{equation}  
L^2(\R^d, dx) = \bigoplus_{N\in \mathbb N } V_{\hbar,E_N(\hbar)}, \;\; \widehat{H}_\hbar|_{V_{h,E_N(\hbar)}}  =  \hbar (N + \frac{d}{2})
\end{equation} 
An orthonormal basis of eigenfunctions of \eqref{Hh} is given by scaled Hermite functions,
\begin{equation}
\phi_{\alpha,h}(x)=h^{-d/4}p_{\alpha}\lr{x\cdot
h^{-1/2}}e^{-x^2/2h},\label{E:Scaling Relation}
\end{equation}
where $\alpha=\lr{\alpha_1,\ldots, \alpha_d}\geq (0,\ldots,0)$ is a
$d-$dimensional multi-index and
$p_{\alpha}(x)$ is the product $\prod_{j=1}^d p_{\alpha_j}(x_j)$ of the
hermite polynomials $p_k$ (of degree $k$)
in one variable. The eigenvalue of $\phi_{\alpha,h}$ is given by
\begin{equation} \label{EV}
H_{h} \phi_{\alpha,h} = h (|\alpha|+d/2) \phi_{\alpha,h}.
\end{equation}
The multiplicity of the eigenvalue $ h (|\alpha|+d/2)$ is the partition
function of $|\alpha|$, i.e. the number
of $\alpha=\lr{\alpha_1,\ldots, \alpha_d}\geq (0,\ldots,0)$  with a fixed
value of $|\alpha|$.  The spectral projections \eqref{PiDEF}  are given by
\begin{equation} \label{COV}
\Pi_{\hbar, E_N(\hbar)} (x, y) : = \sum_{|\alpha|=N} \phi_{\alpha,h_N}(x)
\phi_{\alpha,h_N}(y).   \end{equation}

 \subsection{Semi-classical scaling}

To clarify the roles of $E, N, \hbar$ and to  facilitate comparisons to references such as \cite{F} which do not employ the semi-classical scaling \eqref{COV}, we record the following notation throughout the article. 
\begin{itemize}
\item For \eqref{Hh} with $\hbar =1$, the eigenvalues are $E_N(1) = N + \frac{d}{2}$ with corresponding
eigenspaces $V_N = V_{1, E_N(1)}$ (see \eqref{ENh}).  We denote
by $\Pi_N: L^2(\R^d) \to V_N$ the orthogonal projections (again, with $\hbar = 1, E = E_N(1). $\bigskip

\item The projections $\Pi_{\hbar, E_N(\hbar)}$ \eqref{COV} (see also \eqref{ENh})  are related
to $\Pi_N$ by
$$\Pi_{\hbar, E_N(\hbar)}(x, y) = \hbar^{-d/2} \Pi_N(\hbar^{-1/2} x, \hbar^{-1/2} y). $$

\item The Wigner distribution of $\Pi_N$ is denoted by, 
\begin{equation} W_N(x, \xi) := 
\int_{\R^d} \Pi_{N} \left( x+\frac{v}{2}, x-\frac{v}{2} \right) e^{ i v p} \frac{dv}{(2\pi )^d} \label{E:Wignera1}.
\end{equation}

\item The Wigner distribution of $\Pi_{\hbar, E_N(\hbar)}$ is
\[W_{\hbar, E_N(\hbar)}(x,\xi)=\int_{\R^d} \Pi_{\hbar, E_N(\hbar)}\lr{x+\frac{v}{2}, x-\frac{v}{2}}e^{-\frac{i}{\hbar}v\cdot \xi}\frac{dv}{(2\pi \hbar)^d}.\]
It is related to that of $\Pi_N$ by
\begin{equation} \label{Halfscale} 
\begin{array}{l}  W_{\hbar,E_N(\hbar)}(x,\xi)=\hbar^{-d}W_{N}\lr{x/\hbar^{1/2},\, \xi/\hbar^{1/2}}
.\end{array} 
\end{equation} 
\end{itemize}

%It follows that,
%\begin{equation} \label{HalfscaleE} \left\{ 
%\begin{array}{l}  W_{\hbar_N,E_N(\hbar)}(x,p)=\hbar_N^{-%d}W_{N}\lr{x/\hbar_N^{1/2},\, p/\hbar_N^{1/2}} = (N +\frac{d}{2})^{d} 
%W_N((N + \frac{d}{2})^{\half} x, (N + \frac{d}{2})^{\half} \xi).
%,\\ \\
%W_{\hbar, E}(x, p)=(2E)^{-d}W_{\hbar/2E, 1/2}(x/\sqrt{2E},p/\sqrt{2E})
%\end{array} 
%\right.
%\end{equation} 

\subsection{Weyl pseudo-differential operators, metaplectic covariance and Wigner distributions}

A semi-classical Weyl pseudo-differential operator is defined by the formula, 
$$Op_h^w(a) u(x) = \int_{\R^d} \int_{\R^d} a_{\hbar}(\half(x + y), \xi) e^{\frac{i}{\hbar} \langle x - y, \xi \rangle } u(y) dy d \xi. $$ We refer to
\cite{F, Zw} for background.

A key property of Weyl quantization is the so-called metaplectic covariance. 
Let $Sp(2d, |R) =  Sp(T^* \R^d, \sigma)$ denote the symplectic group and let $\mu(g)$ denote
the metaplectic representation of its double cover. Then,
$\mu(g) Op_h^w(a) \mu(g)  = Op_h^w(a \circ T_g), $ where $T_g: T^*\R^d \to T^*\R^d$ denotes translation by $g$.

In particular, $U \in U(d)$ acts on $L^2(T^* \R^d)$ by translation $T_U$ of functions, using the identification $T^*\R^d \simeq \C^d$ defined by the standard complex structure $J$. 
 $U(d) \subset Sp(2d, \R) $ is a subgroup of the symplectic group and the complete symbol $H(x, \xi)$ of \eqref{Hh} is $U(d)$ invariant, so by metaplectic covariance, 
 $\hat{H}_{\hbar}$ commutes with the metaplectic represenation of $U(d).$

\subsection{\label{RADIALSECT} Proof that the Wigner distribution is radial}
The fact that the Wigner distribution \eqref{E:Wignera} is radial can be seen from the $U(d)$ symmetry of the isotropic oscillator without  calculations. 
As mentioned above, the  classical Hamiltonian  $H(x, \xi) = \half(||x||^2 + ||\xi||^2)$ 
is invariant under the standard action of the unitary group  $U(d)$ on $\C^d \simeq T^* \R^d$. Since $U(d) \subset Sp(2d, \R) = Sp(T^* \R^d, \sigma)$ is a subgroup of the symplectic group, the action is quantized by the metaplectic representation on  $L^2(\R^d)$  and commutes
with $\hat{H}_{\hbar}$ and therefore with $\Pi_{\hbar, E_N(\hbar)}$.
By metaplectic covariance, the Wigner function is invariant under the lift of the $U(n)$ action to $T^* \R^d$, so that,   all $U \in U(N)$, $$W_{U^*\Pi_{\hbar, E_N(\hbar)}U}(x,\xi) = W_{\hbar, E_N(\hbar)}(U(x, \xi)). $$
 This gives a simple explanation of the fact that  $W_{\hbar, E_N(\hbar)} $ is a function of $||x||^2 + ||\xi||^2$. In particular,
 the $S^1$ subgroup $e^{it \hat{H}}$ generated by the isotropic Harmonic Oscillator corresponds to the periodic
 Hamiltonian flow $\exp t \Xi_H$ of the classical oscillator, and $W_{\hbar, E-N(\hbar)}$ is invariant under this group.

 \subsection{\label{HSSECT}Trace and Hilbert-Schmidt properties}
 
 It follows that  $\Pi_{\hbar, E_N(\hbar)}$.
By using the identity
$$\langle Op^w(a) f, f \rangle = \int_{T^*\R^d} a(x, \xi) W_{f, f}(x, \xi) dx d\xi, $$ of \cite[Proposition 2.5]{F} for  orthonormal basis elements $f = \phi_{\alpha, \hbar_N}$ of $V_{\hbar, E_N(\hbar)}$ and summing over $\alpha$, one obtains the (well-known) identity, 
\begin{equation} \label{TRACEP} \mathrm{Tr} Op_h^w(a) \Pi_{\hbar, E_N(\hbar)} = \int_{T^* \R^d} a(x, \xi) W_{\hbar, E_N(\hbar)}(x, \xi) dx d\xi.  \end{equation}
%Semi-classical scaling is not used in \cite{F} but one may combine the %formulae there with the scaling relations \eqref{Halfscale} to prove the %formula.
Further, the Wigner transform \eqref{WIGNERDEF1} taking kernels to Wigner functions  is an isometry from Hilbert-Schmidt kernels $K(x,y)$ on $\R^d \times \R^d$ to their Wigner distributions
on $T^*\R^d$ \cite{F}.  From \eqref{TRACEP} and this isometry, it is straightforward to check that,
\begin{equation} \label{INTEGRALS}\left\{  \begin{array}{ll}(i) &  \int_{T^* \R^d} W_{\hbar, E_N(\hbar)}(x, \xi) dx d \xi = \mathrm{Tr} \Pi_{\hbar, E_N(\hbar)} = \dim V_{\hbar, E_N(\hbar)} = \binom{N+d-1}{d-1} \\ &  \\(ii) & 
\int_{T^* \R^d}\left| W_{\hbar, E_N(\hbar)}(x, \xi) \right|^2 dx d \xi = \mathrm{Tr} \Pi^2_{\hbar, E_N(\hbar)} = \dim V_{\hbar, E_N(\hbar)}=\binom{N+d-1}{d-1}\\ & \\ (iii)& \int_{T^* \R^d} W_{\hbar, E_N(\hbar)}(x, \xi) \overline{W_{\hbar, E_M(\hbar)}(x, \xi)}dx d \xi = \mathrm{Tr} \Pi_{\hbar, E_N(\hbar)}  \Pi_{\hbar, E_M(\hbar)}= 0, \; \mathrm{for}\; M \not=N. \end{array} \right.,\end{equation} 
In these equations, $N=\frac{E}{\hbar}-\frac{d}{2},$ and $\binom{N+d-1}{d-1}$ is the composition function of $(N,d)$ (i.e. the number of ways to write $N$ as an ordered us of $d$ non-negative integers). Thus, the sequence,
$$\{ \frac{1}{\sqrt{\dim V_{\hbar, E_N(\hbar)}}} W_{\hbar, E_N(\hbar)}\}_{N=1}^{\infty}\subset L^2(\R^{2n}) $$
is orthonormal.
%As a result of the last two formulae, the $L^2$ norms of each of the sums %we study below can be readily determined.

In comparing \eqref{TRACEP}, \eqref{INTEGRALS}(i)-(ii) one should keep
in mind that $W_{\hbar, E_N(\hbar)}$ is rapidly oscillating in $\{H \leq E\}$
with slowly decaying tails in the interior of $\{H \leq E\}$, with a large `bump' near $\Sigma_E$  and with maximum
given by Proposition \ref{WIGBOUND}. Integrals (e.g. of $a \equiv 1$)
against $W_{\hbar, E_N(\hbar)}$ involve a lot of cancellation due to the
oscillations. The square integrals in (ii) enhance the `bump' and decrease the tails and of course are positive.

\subsection{The Wigner transform}

For any Schwartz kernel $K_{\hbar} \in L^2(\R^d \times \R^d)$ one may define the Wigner distribution of $K_{\hbar}$ by
 \begin{equation}
 W_{K, \hbar}(x, \xi): = (2 \pi\hbar)^{-d} \int_{\R^d} K_{\hbar} \left( x+\frac{v}{2}, x-\frac{v}{2} \right) e^{-\frac{i}{\hbar} v \xi} \frac{dv}{(2\pi h)^d}, \label{E:WignerK}
\end{equation} As mentioned in Section \ref{HSSECT},
the map from $K_\hbar \to W_{K, \hbar}$ defines a unitary operator $$\wcal_{\hbar}: L^2(\R^d \times \R^d) \to L^2(T^*\R^d),$$ 
which we will call the `Wigner transform'.

The unitary group $U(d)$ acts on $L^2(\R^d \times \R^d)$ by conjugation,$U(g) \cdot K = g K g^*$.
where we identify $K(x,y) \in L^2(\R^d \times \R^d)$ with the associated Hilbert-Schmidt operator.  Metaplectic covariance implies that,
$$\wcal_{\hbar} U(g) = T_g \wcal_{\hbar}. $$

\subsection{\label{EIGSECT} Eigenvalue equation}

We  employ the well-known  semi-classical Moyal $*_{\hbar}$ product on symbols of Weyl pseudo-differential
operators is defined by \cite{F, Sh}
$$Op_{\hbar}^w(a) \circ Op^w_{\hbar}(b) = Op^w_{\hbar}(a *_{\hbar} b). $$
The Wigner distribution $W_{\hbar, E_N(\hbar)}$ is the Weyl symbol of the eigenspace projection $\Pi_{\hbar, E_N(\hbar)}$.

%Since
%$\Pi_{\hbar, E}$ is a self-adjoint projection, its Wigner distribution %$W_{\hbar, E}$ is real valued and satisfies\begin{equation}
%W_{\hbar, E} *_{\hbar} W_{\hbar, E} = W_{\hbar, E}, \end{equation}
%with the Weyl (-Moyal) $*_{\hbar}$ product.  
%Moreover, the fact that $\Pi_{\hbar, E}$ commutes
%with $\hat{H}_{\hbar}$ implies that
%$$W_{\hbar, E} * H - H  * W_{\hbar, E} = 0, $$
%which implies that the symbols Poisson commute.
%$$\{ W_{\hbar, E}, H \}  = 0. $$
%This gives an abstract explanation for the fact that $W_{\hbar, E}(x, \xi)$
%is a function of $H$. 
The eigenvalue equation, 
$$\hat{H}_{\hbar} \Pi_{\hbar, E_N(\hbar)} = E_N(\hbar) \Pi_{\hbar, E_N(\hbar)} $$
 gives
\begin{equation} \label{WIGNEREIG} H *_{\hbar} W_{\hbar, E_N(\hbar)} = W_{\hbar, E_N(\hbar)} *_{\hbar} H =  E_N(\hbar) W_{\hbar, E_N(\hbar)}. \end{equation}

  It is proved in Section \ref{RADIALSECT} that $W_{\hbar, E_N(\hbar)}$ is a radial function on $\R^{2n}$, the explicit formula being
given in Proposition \ref{WIGNERLAGUERRE}. 
Up to a scalar fixed by the trace identities, $W_{\hbar, E_N(\hbar)}$ is the only radial solution of these equations.  Indeed,  
the equations  \eqref{WIGNEREIG}
are equivalent to,
\begin{equation} \label{WIGNEREIG2}\half  \left( \sum_j  \lr{x_j + \frac{i \hbar}{2} \partial_{\xi_j}}^2 +   \lr{\xi_j + \frac{i \hbar}{2} \partial_{x_j}}^2  \right) W_{\hbar,E_N(\hbar)}
= E_N(\hbar) W_{\hbar, E_N(\hbar)}.  \end{equation}
This equation is related to that of \cite[(1.3.14)]{T}.
The imaginary part vanishes,
$(x_j \partial_{\xi_j} - \xi_j \partial_{x_j} ) W_{\hbar, E} = 0, \;\; \forall j$, since
$W_{\hbar, E_N(\hbar)}$ is radial.
We then simplify  \eqref{WIGNEREIG2} to 
\begin{equation} \label{WIGNEREIG3} \left(- \frac{\hbar^2}{8}   (\Delta_{\xi} + \Delta_x) + H(x,\xi)\right) W_{\hbar,E_N(\hbar)}
= E_N(\hbar) W_{\hbar, E_N(\hbar)}.  \end{equation}
\subsection{\label{ZERO1}The special point $(x, \xi) = (0,0)$}
In this section we briefly return to the observation in Section \ref{SUPSECT}
that  $W_{\hbar, E_N(\hbar)}$ has its  global maximum at the origin
$(x, \xi) = (0,0)$ (see \eqref{BUP}).

As noted in Section \ref{EIGSECT}, $W_{\hbar, E_N(\hbar)}$ is an eigenfunction of an (essentially isotropic) \Sch operator \eqref{WIGNEREIG3} on $T^*\R^d$. 
 By
\cite[Lemma 10]{HZZ15},  the eigenspace spectral projections for the isotropic harmonic oscillator in dimension $d$ satisfies,
$$\Pi_{h,E}(x,x)=\lr{2\pi
  h}^{-(d-1)}\lr{2E-\abs{x}^2}^{\frac{d}{2}-1}\omega_{d-1}\lr{1+O(h)}, $$
  for a dimensional constant $\omega_d$. We apply this result to the eigenspace projections for \eqref{WIGNEREIG3} in dimension $2d$ and find that at the point $(0,0)$ its diagonal value is of order $\hbar^{-2d + 1}$.  We then express this eigenspace projection in terms of an orthonormal
  basis for the eigenspace. As discussed in Section \ref{HSSECT}, one orthonormalized term is given by $\frac{1}{\sqrt{\dim V_{\hbar, E_N(\hbar)}}} W_{\hbar, E_N(\hbar)}$. Note that  ${\dim V_{\hbar, E_N(\hbar)}} \simeq \hbar^{-2d + 1}$ in dimension $2d$. Due to the normalization and \eqref{BUP}, 
  $$\frac{1}{\sqrt{\dim V_{\hbar, E_N(\hbar)}}} W_{\hbar, E_N(\hbar)}(0,0) \simeq \hbar^{-2 d + 1 + d -\half} = \hbar^{-d + \half}. $$
  
  Pointwise Weyl laws for eigenfunctions of  \Sch operators show that this order of magnitude for the sup-norm of an $L^2$-normalized eigenfunction is
the maximum possible in the dimension $2 d$ of $T^* \R^d$. It is proved
in \cite{SoZ16} that when such `maximal eigenfunction growth' occurs, and if the \Sch is real-analytic, then
there must exist a point $q$ at which all Hamiltonian orbits with initial position
$q$ loop back to $q$ at a fixed time and that the first return map must preserve an $L^1$ function on $\Sigma_E(q)$. This is of course trivially true in the
case of \eqref{Hh}, since all orbits are periodic. So there does not exist a completely geometric explanation for the fact that the maximal order of growth is obtained at the origin.
  
But there exists a simple spectral geometric explanation for the order of magnitude at the origin: All eigenfunctions of \eqref{WIGNEREIG3} with the exception of the radial eigenfunction  $W_{\hbar, E_N(\hbar)}(0,0)$ vanish at the origin $(0,0)$ since they transform by non-trivial characters
  of $U(d)$ and $(0,0)$ is a fixed point of the action. Consequently, the value of the eigenspace projection on the diagonal at $(0,0)$ is the square of
   $W_{\hbar, E_N(\hbar)}(0,0)$ and that accounts precisely for the order of growth.
  
%The origin is of course distinguished in that it is the 
%unique critical point of $H(x, \xi)$ and (consequently) is the unique
%fixed point of $\Phi^t$.
%If $y = 0$ in \eqref{hatGy}, so that $|\eta|^2 = E$, then $\Phi^t(y, \eta) = (y, \eta)$ and thus, $$FL(0) = \Sigma_E(0). $$
%It follows that $\pi: FL(0) \to \{0\}$ and that $0$ is the unique point with
%this property. 
%Setting $y = 0$
%gives,
%$$\hat{\Gamma}_E(0) = \rm{diag}(\Sigma_E(0) \times \Sigma_E(0)). $$
%Under $\chi$ \eqref{CHIDEF},we have
%  \begin{equation} \label{CHI0}\chi(\hat{\Gamma}_E(0))=  \chi(0, \xi, v, \nu) %= (- \nu,  \nu,  \half (\xi + v), \half(\xi - v)). \end{equation}
  
 We note some further interesting facts about the special point $(0,0)$.
 One has the exact formula, 
     
\begin{equation} \label{Q} \begin{array}{lll} W_{\hbar, E_N(\hbar)}(0,0) & = &
  \int_{\R^d} \Pi_{\hbar, E_N(\hbar)}(\frac{v}{2}, - \frac{v}{2}) dv = \mathrm{Tr}\; \tau \Pi_{\hbar, E_N(\hbar)}, \end{array} \end{equation}
where $\tau(x) = -x$ is translation by the antipodal map of $\R^d$. We further observe that the Wigner distribution \eqref{E:WignerK}  of 
$\tau \Pi_{\hbar, E_N(\hbar)} $ is \begin{equation}
 W_{\tau  \Pi_{\hbar, E_N(\hbar)}, \hbar}(x, \xi): = (2 \pi\hbar)^{-d} \int_{\R^d}  \Pi_{\hbar, E_N(\hbar)} \left( x+\frac{v}{2}, -x+\frac{v}{2} \right) e^{-\frac{i}{\hbar} v \xi} \frac{dv}{(2\pi h)^d},
\end{equation} 
and $$\mathrm{Tr}\; \tau \Pi_{\hbar, E_N(\hbar)}
=  (2 \pi\hbar)^{-d} \int_{T^*\R^d}   \Pi_{\hbar, E_N(\hbar)} \left( x+\frac{v}{2}, -x+\frac{v}{2} \right) e^{-\frac{i}{\hbar} v \xi} \frac{dv d \xi}{(2\pi h)^d},$$
which obviously is the same as \eqref{Q}.

It is well-known that asymptotically, the trace of a semi-classical Fourier
integral operator is to leading order given by the integral of its principal
symbol over the fixed point set of its canonical relation. The element
$\tau$ generates a $\Z_2$ subgroup acting on $T^*\R^d$ as $(x, \xi) \to (-x, -\xi) $, whose fixed point set is the origin $(0,0)$. In this case, \eqref{Q} gives an exact formula of this kind.

\section{Wigner distribution of the propagator: Proof of Propositions \ref{WIGNERLAGUERRE}, \ref{ucalprop-new}}\label{WIGPROP} 
\noindent Let $U_{\hbar}(t) = e^{\ihbar t \widehat{H}_h}$ denote the semi-classical propagator as in \eqref{PROPDEF}. Then its Wigner distribution is 
\begin{equation} \label{WIGU} 
\ucal_\hbar(t,x,\xi)= \int e^{- \ihbar 2 \pi  \xi\cdot v} U_{\hbar}\lr{t, x + \frac{v}{2}, x - \frac{v}{2}}\frac{dv}{(2\pi \hbar)^d}. \end{equation}
This may be put in a concrete form using the Mehler formula for $U_\hbar(t, x,y)$ in the position representation  \cite{F}:
\begin{equation}  \label{E:Mehler}
 U_\hbar(t, x,y) =  e^{-\ihbar t H_\hbar}(x,y)=  \begin{array}{l}  (2\pi i \hbar \sin t)^{-d/2}
 \exp\left( \frac{i}{\hbar}\left(
 \frac{\abs{x}^2 + \abs{y}^2}{2} \frac{\cos t}{\sin t} - \frac{x\cdot
 y}{\sin t} \right) \right),
 \end{array}
\end{equation}
where $t \in \R$ and $x,y \in \R^d$. The right hand side is singular at $t=0.$ It is well-defined as a distribution, however, with $t$ understood as $t-i0$. Indeed, since $H_h$ has a positive spectrum the propagator $U_h$ is holomorphic in the lower half-plane and $U_h(t, x, y)$ is the boundary value of a holomorphic function in $\{\Im t < 0\}$. In Section \ref{S:Wigner-prop-pf}, we prove Proposition \ref{ucalprop-new}. We combine the result of computation in Section \ref{S:Wigner-Laguerre-Pf} with a generating function identity for Laguerre functions to obtain the exact expression in Proposition \ref{WIGNERLAGUERRE} for the Wigner functions $W_{\hbar, E_N(\hbar)}$ for the spectral projection onto the $N^{th}$ eigenspace of $\widehat{H}_\hbar.$

\subsection{Proof of Proposition \ref{ucalprop-new}}\label{S:Wigner-prop-pf} 
In this section, we compute the Wigner transform $\ucal_{\hbar}(t,x,\xi)$ for the propagator $U_\hbar(t,x,\xi)$ of the isotropic harmonic oscillator $\widehat{H}_{\hbar}$ (see \eqref{Hh}). Our argument closely follows the startegy used by Thangavelu in \cite[Thm. 1.3.6]{T}. Since $\widehat{H}_\hbar$ is the sum of commuting $1$d oscillators, the scaling relation \eqref{E:Scaling Relation} yields
\begin{equation}
U_\hbar(t,x,y) =\prod_{j=1}^d \hbar^{-1/2}U(t,x_j/\hbar^{1/2}, y_j/\hbar^{1/2})\label{E:prop-prod}
\end{equation}
where the terms $U\equiv U_1$ in the product are propagators for the standard $1$d oscillator at $\hbar=1$. A simple change of variables in the definition of the Wigner transform 
\[W_{\hbar,K}(x,\xi)=\int_{\R^d}e^{-\frac{i}{\hbar}v\cdot \xi}~~
K\lr{x+\frac{1}{2}v,\, x-\frac{1}{2}v}\frac{dv}{\lr{2\pi \hbar}^d}\] 
of a kernel $K(x,y)$ shows that if we define
\[K_\beta(x,y):=\beta^{d/2}K(\beta^{1/2}x,\beta^{1/2}y),\qquad
x,y\in \R^d\]
then
\[W_{\hbar, K_\beta}(x,\xi)= W_{\hbar, K}(\beta^{1/2}x, \beta^{-1/2}\xi)=\hbar^{-d} W_{1,K}(\beta^{1/2}x, \hbar \beta^{-1/2} \xi).\]
Since the Wigner transform $K\mapsto W_{\hbar, K}$ preserves tensor products, the relation \eqref{E:prop-prod} together with the previous line yield
\[\ucal_{\hbar}(t,x,\xi)=\hbar^{-d}\prod_{j=1}^d\ucal_1\lr{t, \,x_j/\hbar^{1/2},\, \xi_j/\hbar^{1/2}}.\]
It therefore remains to compute the Wigner tranform $\ucal_1(t,x,\xi)$ for the standard $1$d oscillator $\widehat{H}_1$ with $\hbar=1$. To do this, recall that the spectrum of $\widehat{H}_1$ when $d=1$ is $N+1/2,\, N\geq 0$. Writing $h_N$ for the corresponding $L^2-$normalized Hermite functions, we have
\[U(t,x,y)=\sum_{N=0}^\infty e^{-it(N+1/2)}h_N(x)h_N(y).\]
Writing $r=e^{-it},$ the Mehler Formula reads
\begin{equation}
\sum_{N\geq 0}e^{-iNt} h_N(x)h_N(y)= \frac{1}{\sqrt{\pi}}(1-r^2)^{-1/2} \exp\left[-\frac{x^2+y^2}{2}\frac{1+r^2}{1-r^2} + \frac{2r}{1-r^2}xy\right].
\end{equation}
Hence, $U(t,x+v/2,x-v/2)$ is
\[r^{1/2} \sum_{N=0}^\infty r^Nh_N(x+v/2)h_N(x-v/2)=r^{1/2} \frac{1}{\sqrt{\pi}}(1-r^2)^{-1/2}\exp\left[-\frac{1+r}{1-r}\frac{v^2}{4}-\frac{1-r}{1+r}x^2\right].\]
By definition, the Wigner function $\ucal_1(t,x,\xi)$ is the Fourier transform of this expression in the $v$-variable with the normalization
\[\mathcal F (f)(\xi)=\int_{\R} e^{-iv\xi}f(v)\frac{dv}{2\pi}.\]
Using that 
\[\mathcal F(e^{-\alpha x^2})(\xi)= \frac{1}{2\sqrt{\pi \alpha}}e^{-\xi^2/4\alpha}\]
we find 
\[ \ucal_{1}(t,x,\xi)=\frac{r^{1/2}}{\pi(1+r)}\exp\left[-2H
  \frac{1-r}{1+r}\right],\qquad H=H(x,\xi)=\frac{\abs{x}^2+\abs{\xi}^2}{2}.\]
Substituting this into \eqref{E:prop-prod} completes the proof of Proposition \ref{ucalprop-new}.

\subsection{Proof of Proposition \ref{WIGNERLAGUERRE}}\label{S:Wigner-Laguerre-Pf}
Note that the Wigner function for the propagator and spectral projectors are related by
\[W_{\hbar, t}(x,\xi)=\sum_{N\geq 0}r^{N+d/2} W_{\hbar, E_N(\hbar)}(x,\xi),\qquad r=e^{-it}.\]
The usual generating function for associated Laguerre polynomials reads for every $\alpha>-1:$
\begin{equation*}
\sum_{N=0}^\infty L_N^{(\alpha)}(x) e^{-x/2 }r^N=(1-r)^{-\alpha-1}
e^{-\frac{1+r}{1-r}\frac{x}{2}}.
\end{equation*}
Combining this with Proposition \ref{ucalprop-new}, we have
\[W_{\hbar,t}(x,\xi)=\lr{\frac{r^{1/2}}{\pi \hbar (1+r)}}^de^{-\frac{2H}{\hbar}\frac{1-r}{1+r}}=\lr{\pi\hbar}^{-d}\sum_{N\geq 0}r^{N+d/2} (-1)^N L_N^{(d-1)}(4H/\hbar)e^{-2H},\]
%where as before $\abs{w}^2=\norm{x}^2+\norm{\xi}^2.$
 Hence, recalling that $r=e^{-it}$ and using the Fourier expansion 
\begin{equation}
W_{t,\hbar}(x,\xi)= \sum_{N\geq 0} e^{-it E_N(\hbar)/\hbar}W_{\hbar, E_N(\hbar)}(x,\xi)=\sum_{N\geq 0} \lr{e^{-it}}^{N+d/2}W_{\hbar, E_N(\hbar)}(x,\xi)\label{E:Fourier-expansion}
\end{equation}
we find
\[W_{\hbar, E_N(\hbar)}(x,\xi)=\lr{\pi \hbar}^{-d} (-1)^N L_N^{(d-1)}(4H/\hbar)e^{-2H/\hbar},\]
as claimed. 

%\begin{rem} One could also prove Proposition \ref{WIGNERLAGUERRE} directly from the eigen-equation \eqref{WIGNEREIG3} since the associated Laguerre function is the unique radial solution to \eqref{WIGNEREIG3}, and the normalization constant follows from the relation 
%\[\int_{T^*\R^d}W_{\hbar, E_N(\hbar)}(x,\xi)dx\xi~=~\dim\lr{E_N(\hbar)-\text{eigenspace}}.\]
%. \end{rem}

\subsection{\label{WIGasDIST}Distribution properties of the Wigner distribution of
the propagator}
As noted in the Introduction, the Wigner distribution distribution of the propagator  is not a locally $L^1$ function, but is well-defined as a tempered distribution in the sense that the integral
        \begin{equation}\label{ucalhf}  \ucal_{\hbar, f}(x, \xi) : =\int_{\R} \hat{f}(t)  \ucal_{\hbar}(
         t, x, \xi)dt.  \end{equation}
is well-defined for $f \in \scal(\R)$. In fact, $\ucal_{\hbar}(t+\pi, x, \xi)$ resembles the position space propagator \eqref{E:Mehler} with $y = \xi$ except that its phase lacks
the term $- \frac{2}{\cos t/2} \langle x, \xi \rangle $. Note that for small $t$
the phase of \eqref{E:Mehler} is essentially $\frac{x^2 + y^2}{2 t} - \frac{1}{t}\langle x, y \rangle = |x - y|^2/2t$ and that as $t \to 0$ this kernel
weakly approaches $\delta(x - y)$. Similarly,  $\ucal_{\hbar} (t, x, \xi)
\to \delta_0(x, \xi)$ as $t \to  \pi$, the Dirac distribution at the point $(0,0) \in T^* \R^n$. Indeed, at $t=\pi$ it is the Wigner distribution of $\delta(x + y)$, which
is $\int_{\R^n} \delta(x - \frac{v}{2} + x + \frac{v}{2}) e^{- i \langle v, \xi \rangle} dv = \delta(2x) \delta(\xi). $ Since it is a locally $L^1$ function at times
$t \notin \pi \Z$ and is a measure when $t \in \pi \Z$ it is a measure for all $t$. We may consider it as a one-parameter family of measures of the variable $r^2= 2H(x,\xi)=||x||^2 +||\xi||^2 \in  \R_+$ with integration measure $r^{2d-1} dr$. We often fix  and think of $\ucal_{\hbar}(t, x, \xi)$ as a distribution in $t$:
\begin{equation}
 f \mapsto \ucal_{\hbar, f}(H) = \frac{1}{(2\pi\hbar)^d}  \int_{\R}(\cos(t/2- i0)^{-d}  \hat{f}(t) e^{-\frac{2Hi}{\hbar} \tan(t/2-i0)}\frac{dt}{2\pi},\quad H=\frac{\norm{x}^2+\norm{\xi}^2}{2}.\label{E:Dist-Wig-def}
\end{equation}
Note that this distribution is actually a smooth function away from $t\in \set{\pi, 3\pi, \ldots}.$ Let us now check that the contribution to the integral in \eqref{E:Dist-Wig-def} from a neighborhood of any of these points is $O(\hbar^\infty)$ provided $(x,\xi)$ is not too close to $(0,0).$ Since the Wigner function $\ucal_\hbar(t)$ is periodic, it suffices to check this near $t=\pi.$ To do this, for each $\delta>0$ define the smooth cut-off function $\chi_\delta:\R\gives [0,1]$ satisfying
\[\chi_\delta(t)=\begin{cases}
  1,&\quad \abs{t-\pi}<\delta\\
 0, &\quad \abs{t-\pi} > 2\delta
\end{cases}
.\]
\begin{lemma}\label{LOCLEM}
Uniformly over $(x,\xi)\in T^*\R^d$ with $H(x,\xi)>\hbar^2,$ the localized version
\begin{equation}
  \label{E:elevelloc-2}
  \int_{-\pi}^\pi e^{itE/\hbar} \chi_\delta(t)\widehat{f}(\pi/2+t)  \ucal_{\hbar}(t,x,\xi)\frac{dt}{2\pi}
\end{equation}
of \eqref{E:Dist-Wig-def} with $\delta = 4(H(x,\xi)/E)^{1/2}$ is $O((H(x,\xi)^{-1/2}\hbar)^\infty).$
\end{lemma}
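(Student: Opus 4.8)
The plan is to treat \eqref{E:elevelloc-2} as a one-dimensional oscillatory integral concentrated at the singular point $t=\pi$ of $\ucal_\hbar(\cdot,x,\xi)$ and to read off the stated decay by repeated integration by parts, after a change of variable that defuses the amplitude singularity there. First I would move the singularity to the origin: setting $t=\pi+s$ and using $\cos\lr{\tfrac{\pi+s}{2}}=-\sin\tfrac s2$, $\tan\lr{\tfrac{\pi+s}{2}}=-\cot\tfrac s2$ in Proposition~\ref{ucalprop-new},
\[
\ucal_\hbar(\pi+s,x,\xi)=\frac{1}{(2\pi\hbar)^d}\,\lr{-\sin\tfrac s2}^{-d}\exp\!\lr{\tfrac{2iH}{\hbar}\cot\!\lr{\tfrac s2-i0}},
\]
so that up to the constant factor $e^{i\pi E/\hbar}$ the integral \eqref{E:elevelloc-2} becomes
\[
\frac{1}{(2\pi\hbar)^d}\int_\RR \wt\chi_\delta(s)\,b(s)\,\lr{-\sin\tfrac s2}^{-d}e^{\frac i\hbar\Psi_H(s)}\,ds,\qquad \Psi_H(s)=sE+2H\cot\tfrac s2 ,
\]
with $b$ smooth and bounded and $\wt\chi_\delta$ supported in $\abs s<2\delta=8(H/E)^{1/2}$, identically $1$ on $\abs s<\delta$. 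The Laurent expansion $\cot\tfrac s2=\tfrac2s-\tfrac s6-\cdots$ shows that the leading part of the phase is $4H/(\hbar s)$, whose oscillation over the localization scale $\abs s\lesssim(H/E)^{1/2}$ is governed by the large parameter $\Lambda:=(HE)^{1/2}/\hbar$; the hypothesis $H>\hbar^2$ says exactly that $\Lambda\gg1$ (with $E$ fixed, $\Lambda\simeq H^{1/2}/\hbar$).

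The decisive move is the substitution $u=1/s$: it converts the dangerous amplitude $\lr{-\sin\tfrac s2}^{-d}\sim(\mp2u)^d$ into a harmless polynomial weight while the Jacobian contributes $u^{-2}$, so one is left with integrals of the shape
\[
\int_{\abs u\,\gtrsim\,1/\delta}u^{d-2}\,\wt b(u)\,e^{\frac i\hbar\Phi_H(u)}\,du,\qquad \Phi_H(u):=\Psi_H(1/u)=\tfrac Eu+2H\cot\tfrac1{2u}=4Hu+\tfrac Eu-\tfrac{H}{6u}-\cdots .
\]
On the support one has $\abs{\Phi_H'(u)}=\abs{\Psi_H'(1/u)}\,u^{-2}\gtrsim H$, provided $s$ is kept off the radius $\sin^2\tfrac s2=H/E$ at which $\Psi_H'(s)=E-H\csc^2\tfrac s2$ vanishes; since moreover $u^{-1}\lesssim\delta\simeq(H/E)^{1/2}$ there, one integration by parts, replacing the amplitude $a$ by $\hbar\,\partial_u(a/\Phi_H')$, gains a factor $O(\hbar\,u^{-1}H^{-1})=O(\hbar(HE)^{-1/2})=O(\hbar H^{-1/2})$; derivatives falling on $\wt\chi_\delta$ are carried by $\abs s\sim\delta$, where the same bound holds, and the negative power $u^{d-2}$ only improves under $\partial_u$, so the induction closes. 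Iterating $k$ times and absorbing the benign factor $(2\pi\hbar)^{-d}$ yields a bound $O((\hbar H^{-1/2})^{k})$ for every $k$, hence $O((\hbar H^{-1/2})^\infty)$, uniformly over $H>\hbar^2$ since the integrand depends on $(x,\xi)$ only through $H$. The $-i0$ prescription is used throughout: it realizes $u\mapsto e^{\frac i\hbar\Phi_H(u)}$ (equivalently $s\mapsto\ucal_\hbar(\pi+s,x,\xi)$) as the boundary value of a function holomorphic and bounded by $1$ in the relevant half-plane, which makes the non-absolutely convergent $u$-integral near $u=\infty$ (that is, $s=0$) well defined and kills the boundary term there at each integration by parts; equivalently it records that $\ucal_\hbar$ extends holomorphically into $\{\Im t<0\}$, into which one could instead deform the $t$-contour.

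The step I expect to be the main obstacle is exactly the lower bound $\abs{\Phi_H'}\gtrsim H$ on $\supp\wt\chi_\delta$, i.e.\ keeping $\abs{\Psi_H'(s)}=\abs{E-H\csc^2\tfrac s2}$ bounded below on the localized region: one must verify that the window $\abs s<2\delta$ stays uniformly away from the radius $\sin^2\tfrac s2=H/E$ where $\Psi_H'$ changes sign, so that no stationary point of $\Psi_H$ is met there (off that radius the singular term $2H\cot\tfrac s2$ dominates $sE$ in $\Psi_H'$ and the bound is automatic). This is what calibrates $\delta$ against $(H/E)^{1/2}$ and is where the competition between the singular phase $4H/(\hbar s)$ and the regular phase $sE/\hbar$ must be weighed carefully; once it is in hand, the remaining work is the routine bookkeeping indicated above.
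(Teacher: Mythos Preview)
Your substitution $u=1/s$ is a different route from the paper's. The paper stays in the $t$-variable and integrates by parts in two stages: first $d$ times with the operator $D_{\hbar,\rho}=-\tfrac{2i}{\hbar\rho}\cos^2(t/2)\,\partial_t$, which fixes the singular factor $e^{-i\rho\tan(t/2)/\hbar}$ and at each application lowers the pole order of the amplitude at $t=\pi$ by one, leaving a smooth amplitude; then repeatedly with $P=\tfrac{\hbar}{i\Psi'}\partial_t$ against the full phase $\Psi(t)=tE-\rho\tan(t/2)$ to harvest powers of $\hbar$. Your change of variable accomplishes the first smoothing step in one stroke (the $\cos^{-d}$ singularity becomes the polynomial weight $u^{d-2}$), and your IBP in $u$ plays the role of the paper's second stage; your $-i0$ discussion handles the endpoint $u=\infty$, which the paper sidesteps since after its first $d$ IBPs the integrand is already regular at $t=\pi$.

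The obstacle you flag at the end is, however, decisive---and with the stated constant $\delta=4(H/E)^{1/2}$ it cannot be overcome (the paper's own argument has exactly the same difficulty at this point). The stationary point of $\Psi_H(s)=sE+2H\cot(s/2)$ is at $\sin^2(s_*/2)=H/E$, so $s_*=2\arcsin\sqrt{H/E}\simeq 2\sqrt{H/E}$ for small $H/E$, which lies well inside the plateau $\{|s|<\delta=4\sqrt{H/E}\}$ where $\chi_\delta\equiv1$. Hence $\Psi_H'$ (equivalently $\Phi_H'$) does vanish on $\supp\wt\chi_\delta$, the IBP operator is singular there, and the localized integral carries a genuine stationary-phase contribution rather than being $O((\hbar H^{-1/2})^\infty)$. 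The cure is simply a smaller constant: with, say, $\delta=\tfrac12\sqrt{H/E}$ one has $\sin^2(s/2)\le s^2/4< H/(4E)$ on $|s|\le 2\delta$, hence $|\Psi_H'(s)|=H\csc^2(s/2)-E\ge 3E$ uniformly, and then your bookkeeping (each integration by parts gaining a factor of order $\hbar(HE)^{-1/2}$, i.e.\ $\hbar H^{-1/2}$ with $E$ fixed) goes through as you describe.
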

\begin{proof}
The operator
\[D_{\hbar, \rho}(t):=-\frac{2i}{\hbar \rho} \cos^2(t/2)\dell_t\]
satisfies
\[D_{\hbar,\rho}(t) e^{-\frac{i}{\hbar}\tan(t/2)\rho}=e^{-\frac{i}{\hbar}\tan(t/2)\rho}.\]
It's adjoint is
\[D_{\hbar, \rho}^*(t):=\frac{2i}{\hbar \rho} \left[\cos^2(t/2)\dell_t~+~\frac{1}{2}\sin(t)\right],\]
and if $g(t)$ is any function that has a pole of order $k$ at $t=\pi$, then $D_{\hbar,\rho}g$ has a pole of order $k-1$ at $t=\pi$. Thus, integrating by parts using $\lr{D_{\hbar,\rho}(t)}^d,$ we find that the integral in \eqref{E:elevelloc-2} has the form
\begin{equation}  \label{E:elevelloc-3a}
 (2\pi \hbar^{-d}) \int_{-\pi/2}^{\pi/2} A_{\hbar,\rho}(t) e^{\frac{i}{\hbar}\Psi_{\rho, E}(t)}\frac{dt}{2\pi}, \qquad \Psi_{\rho,E}(t)=tE-\tan(t/2)\rho
\end{equation}
where $A_{\hbar,\rho}(t)$ is a smooth function (including at $t=\pi$) and is uniformly bounded for $\abs{t}<\pi/4$ and for all $\hbar.$ Indeed, the smoothness of $A_{\hbar,\rho}$ follows from the observation above that each application of $D_{\hbar, \rho}^*(t)$ increments the order of vanishing of the amplitude at $t=\pi$ by $1$. The uniform boundedness follows from the obervation that each application of $D_{\hbar, \rho}(t)$ contains both an $\hbar$ (from the prefactor $D_{\hbar, \rho}(t)$) and an $\hbar^{-1}$ when the time derivative is applied the rapidly oscillating term $e^{itE/\hbar}.$

To complete the argument, note that the time derivative $-H(x,\xi)/\cos^2(t/2) + E$ of the combined phase $\Psi$ is uniformly bounded away from $0$ in the interval $[\pi-\delta, \pi+\delta]$ for
\[\delta = \delta(x,\xi)=4(H(x,\xi)/E)^{1/2}.\] 
Hence on the support of $\chi_\delta$ we may consider the operator
\[P(t):=\frac{\hbar}{i \dell_t \Psi(t)}\dell_t,\qquad P^*(t) = - \frac{\hbar}{i}\lr{\frac{1}{E-H(x,\xi)/\cos^2(t/2)}\dell_t + \frac{H(x,\xi)/\sin(t/2)\cos^3(t/2)}{\lr{E-H(x,\xi)/\cos^2(t/2)}^3}}\]
and its adjoint. Repeated integration by parts using $P(t)$ in \eqref{E:elevelloc-3a} shows that the integral is therefore of size $O((H(x,\xi)^{-1/2}\hbar)^\infty)$.
\end{proof}

\section{Interface scaling of individual eigenspace Wigner distributions: Proof of Theorem \ref{SCALINGCOR-old}}
In this section, we study the scaling asymptotics  of Wigner distributions $W_{\hbar_N, E_N(\hbar)}(x, \xi)$ of individual eigenspace projections along the interface $\Sigma_E$.  We essentially give two proof of Theorem \ref{SCALINGCOR-old}, one using special facts about Laguerre functions, and the second using semi-classical analysis.

\subsection{Semi-classical proof of Theorem \ref{SCALINGCOR-old}}\label{SCSCORSECT} 
 We have by Proposition \ref{ucalprop-new} that
\[W_{\hbar,E_N(\hbar)}(x,\xi):=\int_{-\pi}^\pi A_\hbar(t)\exp\left[\frac{i}{\hbar}\Psi(t)\right]\frac{dt}{2\pi},\]
where
\[A(t):=\lr{2\pi \hbar\cos(t/2)}^{-d}\]
and
\begin{equation} \label{PSIDEF}\Psi(t)=\Psi_{E,H}(t)=-2H\tan\lr{\frac{t}{2}}+tE.\end{equation}
The critical point equation $\dell_t \Psi=0$ is,
\begin{equation} \label{CPE2} \cos^2\lr{t/2}=H/E.\end{equation}
Hence, when $H\approx E,$ we will have a degenerate critical point at $t=0,$ causing Airy-like behavior. To see this more precisely, we write as in the statement of Theorem \ref{SCALINGCOR-old} 
\[H=H(x,\xi)=E - u\lr{\frac{\hbar}{2E}}^{2/3}.\]
Since there are no critical points of $\Psi(t)$ outside $\abs{t}\gg \hbar^{1/3}|u|^{1/2}$, we may use Lemma \ref{LOCLEM} and integration by parts to write
\begin{equation}
W_{\hbar, E_N(\hbar)}(x,\xi)= \int_{-\pi}^\pi \chi(\hbar^{-1/3+\epsilon} \abs{u}^{-1/2}t) A_\hbar(t)\exp\left[\frac{i}{\hbar}\Psi(t)\right]\frac{dt}{2\pi}+O(\hbar^\infty),\label{E:wig-unscaled}
\end{equation}
where $\chi:\R\gives [0,1]$ is a smooth cut-off function that is identically $1$ on $[-1,1]$ and vanishes outside $[-2,2].$ Taylor expansion gives
\[\Psi(t\hbar^{1/3})= t\hbar^{1/3}\lr{E-H} - \frac{\hbar H}{4}\frac{t^3}{3} + O(\hbar^{5/3}t^5),\quad A(t\hbar^{1/3})=(2\pi\hbar)^{-d}\lr{1+O(t^2\hbar^{2/3})}.\]
Hence, we obtain after changing variables $T=t (\hbar E/4)^{1/3}$ in \eqref{E:wig-unscaled}
\[W_{\hbar, E_N(\hbar)}(x,\xi)= \frac{\hbar^{1/3}}{(2\pi \hbar)^{d}} C_E \int_{\R}\chi(C_E\hbar^{\epsilon} T\abs{u}^{-1/2}) \cos\lr{\frac{Tu}{E} + \frac{T^3}{3}} \frac{dT}{2\pi}~+~O(\hbar^{1-d-6\epsilon}\abs{u}^{3}),\]
where $C_E=(4/E)^{1/3}.$ We split the integral over $T$ into two pieces:
\[\int_{\R}\cos\lr{\frac{Tu}{E}+\frac{T^3}{3}}\frac{dt}{2\pi}+\int_{\abs{t}\geq \hbar^{-\epsilon}\abs{u}^{1/2}}(\chi(\hbar^{\epsilon}\abs{u}^{1/2}t)-1)\cos\lr{\frac{Tu}{E}+\frac{T^3}{3}}\frac{dt}{2\pi}.\]
The first term is precisely $\Ai(u)$. The second term is $O(\hbar^\infty)$ because one can integrate by parts repeatedly using
\[-\lr{T^2+\frac{u}{E}}^{-1}\dell_T \cos\lr{\frac{Tu}{E}+\frac{T^3}{3}}=\sin\lr{\frac{Tu}{E}+\frac{T^3}{3}}\]
and the similarly identity for $\sin$ to obtain an arbitrary power of $T$ in the denominator of the integrand. This completes the proof of Theorem \ref{SCALINGCOR-old}. \hfill $\square$

\subsection{Laguerre proof of Theorem \ref{SCALINGCOR-old}
%Theorem \ref{AIRYCOR}
 and Proposition \ref{EXTDECAY}}\label{Laguerre-Scaling-Proof} In this section, we prove \eqref{E:W-scaling} by appealing to the following uniform Airy asymptotics of Laguerre functions.

\begin{proposition} \label{FW} [Frenzen and Wong, \cite{FW} Section 5 (5.12)]
Fix $b\leq s <\infty$, where $0<b<1$ is fixed. For each $n\geq , \alpha\geq 0,$ write
\[ \nu = 4n+2\alpha + 2.\] 
We have
\begin{equation} \label{LAIRY2} \begin{array}{lll}  (-1)^n 2^{\alpha} e^{- \nu s/2} L_n^{\alpha} (\nu s) 
 &= & \Ai (\nu^{2/3} B^2(s)) \alpha_0(s)+ \epsilon_1(s , \nu), \end{array} \end{equation}
where 
\begin{equation}\label{BFORMULA} B(s) = \left\{\begin{array}{ll}  i (3 \beta(s)/2)^{1/3}, & s \in (0,1];
\\&\\
(3 \gamma(s)/2)^{1/3}, & s \in [1, \infty). \end{array}\right. \end{equation}
and
\begin{equation} \label{beta} 
\begin{array}{l} \beta(s) = \half [\cos^{-1} \sqrt{s} - \sqrt{s - s^2}], \\ \\
\gamma(s) = \half [\sqrt{s^2 - s} - \cosh^{-1}\sqrt{s} ]. \end{array} \end{equation}
We have
\begin{equation}
 \label{E:FW_alpha0} \alpha_0(s) =
 \begin{cases}
s^{(1-\alpha)/2} \frac{\sqrt{2|B(s)|}}{(1-s)^{1/4}s^{3/4}},&\quad 0<s<1\\
s^{(1-\alpha)/2} \frac{\sqrt{2B(s)}}{(s-1)^{1/4}s^{3/4}},&\quad s\geq 1
 \end{cases}.
\end{equation}
The error term $\epsilon_1(s,\nu)$ satisfies
\begin{equation} \label{eq:remainder} 
\epsilon_1(s,\nu) \leq 
  C_p \nu^{-5/3}\abs{\twiddle{\beta}_1(s)}\abs{\twiddle{\Ai}'(\nu^{2/3}B^2(s))},
\end{equation}
where 
\[ \twiddle{\Ai}(z) = \begin{cases} \Ai(z) & \text {if $z \geq 0$} \\ [\Ai(z)^2 + \mathrm{Bi}(z)^2]^{1/2} & \text{if $z<0$. } \end{cases} \]
\[
 \twiddle{\Ai}'(z) = \begin{cases} \Ai'(z) & \text {if $z \geq 0$} \\ [\Ai'(z)^2 + \mathrm{Bi}'(z)^2]^{1/2} & \text{if $z<0$. } \end{cases} \]
and finally $\twiddle{\beta}_1(s)$ is a continuous function of $s$ that is uniformly bounded on any compact subset of $s>0.$
\end{proposition}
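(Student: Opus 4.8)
Proposition \ref{FW} is quoted verbatim from \cite{FW}, so in the paper its ``proof'' is simply a citation; here I sketch how I would reconstruct the uniform Airy asymptotics. The plan begins with the observation that $u(x) := e^{-x/2}x^{(\alpha+1)/2}L_n^{(\alpha)}(x)$ solves the Liouville normal form
\[ u''(x) = \lr{\tfrac14 - \tfrac{\nu}{4x} + \tfrac{\alpha^2-1}{4x^2}}u(x), \qquad \nu = 4n+2\alpha+2, \]
and after the rescaling $x=\nu s$ this becomes $\frac{d^2u}{ds^2} = \nu^2\big(f(s)+O(\nu^{-2})\big)u$ with $f(s) = \tfrac{s-1}{4s}$, which has a \emph{simple} turning point at $s=1$: $f<0$ (oscillatory regime) for $s<1$ and $f>0$ (exponentially small regime) for $s>1$. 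This turning-point structure is exactly what forces Airy asymptotics, and the $\tfrac{\alpha^2-1}{4x^2}$ term — a perturbation of size $O(\nu^{-2})$ away from $s=0$ but singular at $s=0$ — is why the hypothesis restricts $s$ to $[b,\infty)$ with $b>0$.

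The core step is a Liouville--Green transformation adapted to a turning point: introduce $\zeta=\zeta(s)$ and $w$ with $u = (d\zeta/ds)^{-1/2}w$ and $(d\zeta/ds)^2\zeta = f(s)$, so the equation becomes $d^2w/d\zeta^2 = \big(\nu^2\zeta+\mathrm{error}\big)w$, whose model solutions are $\Ai(\nu^{2/3}\zeta)$ and $\mathrm{Bi}(\nu^{2/3}\zeta)$. The defining relation integrates to $\tfrac23\zeta^{3/2}=\int_1^s\sqrt f$ for $s>1$ and $\tfrac23(-\zeta)^{3/2}=\int_s^1\sqrt{-f}$ for $s<1$; the substitutions $\sigma=\cosh^2\phi$ and $\sigma=\cos^2\theta$ evaluate these to $\gamma(s)=\tfrac12[\sqrt{s^2-s}-\cosh^{-1}\sqrt s]$ and $\beta(s)=\tfrac12[\cos^{-1}\sqrt s-\sqrt{s-s^2}]$, so that $\zeta = B^2(s)$ with $B$ as in \eqref{BFORMULA}--\eqref{beta}. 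The prefactor $\alpha_0(s)$ in \eqref{E:FW_alpha0} is then the Liouville/Jacobian factor $(\zeta/f)^{1/4}=(d\zeta/ds)^{-1/2}$ times the $s^{(1-\alpha)/2}$ and $2^\alpha e^{-\nu s/2}$ arising from converting back $u\mapsto L_n^{(\alpha)}$; one checks the remaining constant and the selection of the recessive solution $\Ai$ (rather than a $\mathrm{Bi}$-combination) by demanding the correct decay of $L_n^{(\alpha)}(\nu s)$ as $s\to\infty$ and by matching, for fixed $s\in(0,1)$, to the classical Perron-type trigonometric asymptotics of Laguerre polynomials. (Frenzen and Wong themselves use an integral representation and the Chester--Friedman--Ursell cubic change of variable, but the differential-equation route is equivalent and makes the formulas most transparent.)

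Finally, the error bound \eqref{eq:remainder} comes from Olver's quantitative theory for ODEs with a turning point: the difference between $u$ and its Airy approximation is controlled by the total variation of an ``error-control function'' built from $\zeta(s)$ and the perturbing coefficient $\tfrac{\alpha^2-1}{4x^2}$, and tracking powers of $\nu$ through the two rescalings $x=\nu s$, $\zeta\mapsto\nu^{2/3}\zeta$ produces the factor $\nu^{-5/3}$, the continuous locally-bounded multiplier $\twiddle{\beta}_1(s)$ (essentially the error-control integrand), and the majorant $\twiddle{\Ai}'(\nu^{2/3}B^2(s))$ defined in the statement. \textbf{The main obstacle} is the uniformity of everything as $s$ ranges over the whole ray $[b,\infty)$: one must verify that $\zeta(s)$ is a smooth monotone change of variable across $s=1$ and that the error-control function has bounded variation on $[b,\infty)$ so the estimate survives as $s\to\infty$; and the error must be stated \emph{relative to} $\twiddle{\Ai}'$ rather than as an additive $O(\nu^{-5/3})$, precisely because for $s>1$ the main term $\Ai(\nu^{2/3}B^2(s))$ is exponentially small and an additive error would overwhelm it.
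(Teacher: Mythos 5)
You correctly observe that the paper provides no proof of Proposition~\ref{FW}: it is cited verbatim from Frenzen--Wong, and the only trace of their method in the paper is the oscillatory-integral representation \eqref{nu2} in the appendix, which is exactly FW's cubic-phase integral after the Chester--Friedman--Ursell change of variables. Your reconstruction is correct in all its essentials but goes by a genuinely different route: the differential-equation (Liouville--Green/Langer) method rather than FW's integral-representation method. The steps check out: the Liouville normal form for $u=e^{-x/2}x^{(\alpha+1)/2}L_n^{(\alpha)}(x)$ in the scaled variable $x=\nu s$ does give leading coefficient $f(s)=\frac{s-1}{4s}$ with a simple turning point at $s=1$ and a subdominant $\frac{\alpha^2-1}{4x^2}$ term singular at $s=0$ (hence the restriction $s\geq b>0$); the Langer relation $\zeta(\zeta')^2=f$ integrates, via $\sigma=\cosh^2\phi$ and $\sigma=\cos^2\theta$, to $\frac{2}{3}\zeta^{3/2}=\gamma(s)$ for $s>1$ and $\frac{2}{3}(-\zeta)^{3/2}=\beta(s)$ for $s<1$, so $\zeta=B^2(s)$ as in \eqref{BFORMULA}--\eqref{beta}; and the Jacobian $(\zeta/f)^{1/4}$ multiplied by $(\nu s)^{-(\alpha+1)/2}$ and the $2^\alpha$ bookkeeping reproduces $\alpha_0(s)$ exactly as in \eqref{E:FW_alpha0}. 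The error bound in the form $\nu^{-5/3}|\widetilde{\beta}_1|\,\widetilde{\Ai}'$ is indeed what falls out of Olver's error-control theory for a turning point, though your sketch is (necessarily) vague on the bounded-variation verification across $s=1$ and as $s\to\infty$. On the tradeoff between the two routes: FW's integral method produces the explicit coefficient $\widetilde{\beta}_1$ and, in principle, all higher terms of the expansion by further stationary-phase analysis, whereas the ODE method gives uniformity and the recessive-solution identification (why $\Ai$ and not $\mathrm{Bi}$) more structurally and generalizes to non-explicit Schr\"odinger operators where no integral representation is available---the latter being the relevant consideration for the broader program this paper belongs to.
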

Let us first recall the notation. We fix $E>0$ and define for each $N,d\geq 1$
\[\hbar=\hbar(E,N)=\frac{E}{N+d/2}.\]
With this notation, $E$ belongs to the spectrum of the harmonic oscillator with $\hbar(E,N)$ and any $N.$ Thus, $W_{\hbar, E_N(\hbar_N)},$ which we abbreviate $W_{\hbar, E}$ is the Wigner function for the projection onto the eigenspace of energy $E$, and we seek to study its scaling asymptotics around the energy surface $\Sigma_E=\set{(x,\xi)\in T^*\R^d\,|\,\abs{x}^2+\abs{\xi}^2 = 2E}.$ To do this, recall from Proposition \ref{WIGNERLAGUERRE} that 
\[W_{\hbar,N}(x,\xi)=\frac{1}{\lr{\pi \hbar}^d} (-1)^N
L_N^{(d-1)}(4H/\hbar)e^{-2H/\hbar},\quad
H=H(x,\xi)=\frac{\abs{x}^2+\abs{\xi}^2}{2}.\]
Note that the parameter $\nu$ from Proposition \ref{FW} corresponding to this Laguerre function is
\[ \nu = 4N+2(d-1) + 2=\frac{4E}{\hbar}= \frac{2}{\hbar_E},\]
where we've introduced
\[\hbar_E:=\frac{\hbar}{2E}.\]
Consider $(x,\xi)\in T^*\R^d$ with
\[H(x,\xi)=\frac{\abs{x}^2+\abs{\xi}^2}{2}=E - u \hbar_E^{2/3}.\]
We find 
\[W_{\hbar,N}(x,p)=\frac{2}{\lr{2\pi \hbar}^d} (-1)^N 2^{d-1}
L_N^{(d-1)}(\nu s)e^{\nu s/2},\qquad \nu = \frac{2}{\hbar_E},\,\, s =
\frac{H}{E}.\] \
Proposition \ref{FW} therefore yields
\begin{align}
\label{WAIRY}  W_{\hbar, N}(x,\xi)&=\frac{2}{(2\pi \hbar)^d} \left[\nu^{-1/3}\Ai(\nu^{2/3}B^2(s))\alpha_0(s) +\epsilon_1(\nu,s)\right].
\end{align}
To simply this expansion, we need the following estimates, obtained by direct Taylor expansion of the definitions \eqref{beta} and \eqref{E:FW_alpha0},
\begin{equation}
B^2(1+t)=2^{-2/3}t(1+ O(t)),\qquad \alpha_0(1+t)=2^{1/3}(1+O(t)).\label{E:alpha-B-exp}
\end{equation}
These estimates yield
\begin{equation}
\nu^{2/3}B^2\lr{s}= \lr{\frac{2}{\hbar_E}}^{2/3}B^2(1+(u/E)\hbar_E^{2/3})= \frac{u}{E} + O\lr{\abs{u}^2\hbar_E^{2/3}}. \label{E:B-exp}
\end{equation}
Hence, using that $\beta_1$ is continuous and that as $u\gives -\infty$
\begin{equation}
\sup_{u<0} \frac{\abs{\twiddle{\Ai}'(u/E)}}{\lr{1+\abs{u}}^{1/4}}<\infty,\label{E:Airy-growth}
\end{equation}
we find that for $u<0$
\begin{align}
\abs{\epsilon_1(\nu,s)} =O \lr{\hbar_E^{5/3}\twiddle{\Ai}'\lr{u/E + O\lr{\abs{u}^2\hbar_E^{2/3}}}}= O\lr{\hbar_E^{5/3}\lr{1+\abs{u}}^{1/4}},\label{E:error-uneg-1}
\end{align}
with the implied constants being uniform when $u^2\hbar_E^{2/3}$ ranges over a compact set. Similarly, using that 
\begin{equation}
\sup_{u>0} \frac{\abs{\twiddle{\Ai}'(u/E)}/\abs{\Ai(u/E)}}{\lr{1+\abs{u}}^{1/2}}<\infty,\label{E:Airy-growth-2}
\end{equation}
we find that with $s=1+(u/E)\hbar_E^{2/3}$ and $u>0$
\begin{align}
\abs{\epsilon_1(\nu,s)}= \Ai\lr{\nu^{2/3}B^2(s)}\cdot O \lr{\hbar_E^{5/3}\lr{1+\abs{u}}^{1/2}},\label{E:error-upos-1}
\end{align}
again with the implied constant uniform when $u^2\hbar_E^{2/3}$ ranges over a compact set. The estimates \eqref{E:alpha-B-exp} also yield
\begin{equation}
\nu^{-1/3}\Ai(\nu^{2/3}B^2(s))\alpha_0(s)=\hbar_E^{1/3}\Ai\lr{\hbar_E^{-2/3}B^2\lr{s}}\lr{1+O\lr{(s-1)^{2/3}}}.\label{E:leading-term-1}
\end{equation}
Using \eqref{E:B-exp} and the estimates \eqref{E:Airy-growth} and \eqref{E:Airy-growth-2}, we therefore find
\[\nu^{-1/3}\Ai(\nu^{2/3}B^2(s))\alpha_0(s)=\hbar_E^{1/3}\left[\Ai(u/E)+O((1+\abs{u})^{1/4}\abs{u}\hbar_E^{2/3})\right]\]
when $u<0$ and 
\[\nu^{-1/3}\Ai(\nu^{2/3}B^2(s))\alpha_0(s)=\hbar_E^{1/3}\Ai(u/E)\left[1+O((1+\abs{u})^{3/2}u\hbar_E^{2/3})\right]\]
when $u>0.$ Putting these estimates together with the error estimtes \eqref{E:error-uneg-1} and \eqref{E:error-upos-1} above completes the proof of Theorem \ref{SCALINGCOR-old}. In fact, \eqref{WAIRY} together with \eqref{eq:remainder} yield the following more precise result. 

Moreover, for $H(x,\xi)\geq E$, we have the more refined estimate
\begin{theo}\label{AIRYCOR}
\begin{equation}
\label{WIGNERASYMPintro-forbidden}
W_{\hbar, E_N(\hbar)}(x, \xi)  \simeq \frac{2\nu^{1/3} }{(2 \pi \hbar)^d}\Ai(\nu^{2/3}( B^2(H/E))) \alpha_0(H/E)\lr{1+O(\hbar)},\quad \nu = \hbar/4E_N(\hbar).
\end{equation}
\end{theo}

\subsection{\label{EXTDECAYSECT} Exterior exponential decay: Proof of Proposition \ref{EXTDECAY}}
The exponential decay in the exterior of $\Sigma_E$ stated in Propostion \ref{EXTDECAY} follows from Theorem \ref{AIRYCOR} and from the case $s \geq 1$ of Proposition \ref{FW}. 
\begin{proof}
We continue to write $E_N(\hbar)  =E$, and
\[H_E=\frac{H(x,\xi)}{E} > 1,\qquad \nu= 4E/\hbar.\]
Using the Airy asymptotics on the positive real axis in \eqref{AIRYASYM}, we find as $\hbar \to 0$, up to a multiplicative $1+O(\hbar)$ factor, that
\begin{align}
  \label{WIGNERASYMPintro2} 
W_{\hbar, E_N(\hbar)}(x, \xi) &  = \frac{2}{(2 \pi \hbar)^d} \nu^{-1/3} \Ai(\nu^{2/3}B^2(H_E)) \alpha_0(H_E)  \\
\notag & \simeq  \frac{\nu^{-1/3}\lr{\nu^{2/3}B^2(H_E)}^{-1/4} \alpha_0(H_E) }{4\pi^{3/2}(2\pi \hbar)^{d}}  e^{- \frac{2}{3} \lr{\nu B^3(H_E)}}.
%\notag & \simeq C_d E^{-1/3} \hbar^{-d + \frac{1}{3}} \left(((4E/\hbar)^{2/3}( B^2(\rho/E)))  \right)^{-1/4}  \alpha_0(\rho/E)  e^{- \frac{2}{3} \left(\frac{4E}{\hbar})( B^3(\rho/E))) \right)}.
\end{align}
By \eqref{BFORMULA} and by \eqref{beta}, we have
$$B(H_E)^3 = \frac{3}{4}\left[\sqrt{H_E^2 - H_E} - \cosh^{-1}\sqrt{H_E} \right].$$  
Also, by \eqref{E:FW_alpha0}, $\alpha_0(H_E)$ is decreasing as $H_E$ grows, and $B(H_E)$ is bounded by a constant times $H_E$ when $H_E>1$. It follows that there exists a constant $C_1>0$ so that
$$|W_{\hbar, E_N(\hbar)}(x, \xi)| \leq C_1 \hbar^{-d + \frac{1}{2}} 
e^{- \frac{2 E}{\hbar}[\sqrt{H_E^2 -H_E} - \cosh^{-1}\sqrt{H_E} ]},\qquad H_E > 1. $$
Finally, as $H_E \to \infty$, 
 \[\sqrt{H_E^2 - H_E} - \cosh^{-1}\sqrt{H_E}=H_E\lr{1+O(H_E^{-1}\log(H_E))}.\] 
Hence, for $H_E $ sufficiently large,
$$|W_{\hbar, E_N(\hbar)}(x, \xi)| \leq C_1 \hbar^{-d + \frac{1}{2}} 
e^{- \frac{2H}{\hbar}}. $$
\end{proof}

\subsection{\label{BESSELSECT} Bessel and Cosine Asymptotics: Laguerre proof of Proposition \ref{BESSELPROP}}
As before, we fix $E>0$ and set 
\begin{equation}
E=E_N(\hbar)=\hbar(N+d/2).\label{E:fixed-energy}
\end{equation}
We seek to study the behavior of $W_{\hbar, E_N(\hbar)}(x,\xi)$ inside the energy surface but not at the origin, i.e. under the constraint
\[0<H_E= \frac{\abs{x}^2+\abs{\xi}^2}{2E}<1.\]
To do this, we begin with the exact formula 
\begin{equation}
W_{\hbar, E_N(\hbar)}(x,\xi)=\frac{(-1)^N}{(\pi \hbar)^d}e^{-\nu H_E/2} L_N^{(d-1)}(\nu H_E) ,\qquad \nu = \frac{4E}{\hbar}.\label{E:exact-wig}
\end{equation}
Note that the degree $N$ of the Laguerre function grows like $\hbar^{-1}$ under the constraint \eqref{E:fixed-energy}. We will use the following expansion.

\begin{proposition} \label{FW2} [Frenzen and Wong, \cite{FW} Section 5 (5.12)]
Fix an integer $p \geq 1$ let $0\leq t <1$. For each $n\geq , \alpha\geq 0,$ write
\[ \nu = 4n+2\alpha + 2.\] 
We have
\begin{equation} \label{LBESS2}  2^{\alpha} e^{- \nu t/2} L_n^{\alpha} (\nu t) 
 =  \frac{J_{\alpha}(\nu A(t))}{A(t)^{\alpha}}\alpha_0(t) + \epsilon_1(t , \nu),  \end{equation}
where 
\begin{equation}\label{AFORMULA} A(t) = \frac{1}{2}\lr{\sin^{-1}(\sqrt{t}) + \sqrt{t-t^2}} \end{equation}
and
%{-1/2}
\begin{equation}
 \label{E:FW_alpha0_bes} \alpha_0(t) = \lr{\frac{A(t)}{\sqrt{t}}}^{\alpha+1} \lr{\frac{t}{1-t}}^{1/4}(A(t))^{-1/2}.
\end{equation}
%and
%\begin{equation}
% \label{E:FW_beta1_bes} \beta_1(t) =
%\frac{\alpha_0(t)A(t)}{2}\lr{\frac{1-4\alpha^2}{8}(A(t))^{-1} +\frac{(1-t)^{1/2}}t^{1/2}\lr{\frac{4\alpha^2-1}{8} +\frac{1}{4}\frac{t}{1-t}+\frac{5}{24}\frac{t^2}{(1-t)^2}}}
%\end{equation}
The error term $\epsilon_1(s,\nu)$ satisfies for each $0\leq a<1$
\begin{equation} \label{eq:remainder_bess} 
\sup_{0\leq t <a}\abs{\epsilon_1(t,\nu) }\leq C_{a} \nu^{-1}\frac{\abs{J_{\alpha+1}(\nu A(t))}}{A(t)^{\alpha+1}}
%\begin{cases}
 % C_{p,a} \nu^{-p}\frac{\abs{J_{\alpha+1}(\nu A(t))}}{A(t)^{\alpha+1}},&\quad p\text{ odd}\\
 % C_{p,a} \nu^{-p}\frac{\abs{J_{\alpha+1}(\nu A(t))}}{A(t)^{\alpha}},&\quad p\text{ even}
%\end{cases}.
\end{equation}
\end{proposition}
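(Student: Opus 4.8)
The statement of Proposition \ref{FW2} is, like that of Proposition \ref{FW}, a transcription into the notation of this paper of one of the uniform asymptotic expansions of Laguerre polynomials established by Frenzen and Wong in \cite{FW}: Proposition \ref{FW} uses their turning-point (Airy) expansion, valid near $s=1$, while Proposition \ref{FW2} uses their oscillatory-interval (Bessel) expansion, valid on compact subintervals of $[0,1)$. The plan is therefore threefold: (i) quote the relevant theorem of \cite{FW}; (ii) match the normalization conventions involved — the scaling of the Laguerre argument, the normalization of $J_\alpha$, and the choice of Liouville--Green variable — so that the leading coefficient collapses to the closed form $\alpha_0(t)$ of \eqref{E:FW_alpha0_bes} and the prefactor $2^\alpha$ emerges with the correct exponent; and (iii) specialize their general $p$-term remainder to the case $p=1$ (the hypothesis ``$p\geq 1$'' is inherited verbatim from the general statement in \cite{FW} and is not otherwise used here).

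For (i)--(ii): with the normalization $\nu = 4n+2\alpha+2$, the endpoint $t=0$ of the oscillatory interval is a Bessel-type singular point of order $\alpha$ — this is the content of the Mehler--Heine limit, which shows that a suitably scaled $L_n^{(\alpha)}$ matches $J_\alpha$ near the origin — while $t=1$ is a simple turning point. The Liouville--Green variable measured so that it vanishes at $t=0$ is exactly $A(t)$ of \eqref{AFORMULA}; using $\sin^{-1}\sqrt t + \cos^{-1}\sqrt t = \pi/2$ one verifies the clean identity $A(t) = \frac{\pi}{4} - \beta(t)$, with $\beta$ as in \eqref{beta}, so that $A(0)=0$, $A(1)=\pi/4$, and $A(t)\sim \sqrt t$ as $t\to 0^+$ — precisely the behavior needed for $\nu A(t)$ to be the correct Bessel argument. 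This identity is also what links Proposition \ref{FW2} to Proposition \ref{FW}: on the overlap of the two regions, $\nu A(t) = \nu\pi/4 - \nu\beta(t)$, and the large-argument asymptotics of $J_\alpha(\nu A(t))$ match the negative-argument asymptotics of $\Ai(\nu^{2/3}B^2(t))$ because $\tfrac23|\nu^{2/3}B^2(t)|^{3/2} = \nu\beta(t)$. The leading amplitude in \cite{FW} is then the product of the Jacobian factor $(dA/dt)^{-1/2}$, which up to a constant is $\lr{t/(1-t)}^{1/4}$ and supplies the middle factor of \eqref{E:FW_alpha0_bes}; the factor $\lr{A(t)/\sqrt t}^{\alpha+1}$ obtained by comparing the order-$\alpha$ vanishing of $J_\alpha(\nu A(t))$ at $t=0$ with that of $L_n^{(\alpha)}$ together with the compensating power of $A$; and an overall $A(t)^{-1/2}$ — collecting these gives $\alpha_0(t)$ exactly as stated. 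The constant $2^\alpha$ on the left of \eqref{LBESS2} is forced by the scaling $\nu = 4n+2\alpha+2$ (rather than $2n+\alpha+1$): it is the power of $2$ produced when the Gamma-ratio $\Gamma(n+\alpha+1)/n!$ in the classical formula for $L_n^{(\alpha)}$ is combined, via Stirling, with the $\nu^{\alpha/2}$ coming from the normalization of $J_\alpha$ and of the amplitude, and it is the same constant that already appeared in Proposition \ref{FW}.

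For (iii) and the error bound: taking $p=1$ leaves one leading term plus a remainder which the estimate of \cite{FW} bounds, uniformly in $n$ (equivalently $\nu$) and uniformly for $0\leq t \leq a$, by a constant $C_a$ times $\nu^{-1}\abs{J_{\alpha+1}(\nu A(t))}/A(t)^{\alpha+1}$; this is precisely \eqref{eq:remainder_bess}. The restriction $a<1$ is essential and not cosmetic: the Bessel expansion degenerates at the turning point $t=1$, where it must be replaced by the Airy expansion of Proposition \ref{FW}, and the uniformity of $C_a$ right up to (but not including) $t=1$ is the one genuinely analytic point that rests on the detailed remainder analysis in \cite{FW} rather than on soft considerations.

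\textbf{Main obstacle.} There is no new analysis here; the only real work — and the only place an error could slip in — is the three-fold bookkeeping of conventions needed to see that the \cite{FW} leading coefficient is literally $J_\alpha(\nu A(t))\,A(t)^{-\alpha}\,\alpha_0(t)$ with $A,\alpha_0$ as in \eqref{AFORMULA}--\eqref{E:FW_alpha0_bes} and with the correct power $2^\alpha$ out front. Because exactly this transcription has already been carried out for the turning-point regime in Proposition \ref{FW}, the safest route is to present Proposition \ref{FW2} as the ``$t$ bounded away from $1$'' companion of Proposition \ref{FW}, sharing the same reference, the same $\nu$, and the linked variables $A(t) = \pi/4 - \beta(t)$.
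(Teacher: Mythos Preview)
Your proposal is correct and matches the paper's approach: Proposition \ref{FW2}, like Proposition \ref{FW}, is not proved in the paper but simply quoted from Frenzen--Wong \cite{FW}, so the only content is the transcription of notation that you outline. Your discussion of how $A(t)$, $\alpha_0(t)$, and the prefactor $2^\alpha$ arise from the conventions in \cite{FW}, and the link $A(t)=\pi/4-\beta(t)$ to Proposition \ref{FW}, goes well beyond what the paper itself supplies (it gives no such commentary), but is consistent with the intended citation.
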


 Let $\alpha=d-1$ and $\nu = 4n + 2 \alpha +2=4E/\hbar$ and let $0 < a < 1$.  The Bessel expansion in the previous Proposition combined with \eqref{E:exact-wig} yields:
 \[W_{\hbar, E_N(\hbar)}(x, \xi) = (-1)^N\frac{2}{(2\pi\hbar)^d}\left[\frac{J_{d-1}(\nu A(H_E))}{A(H_E)^{d-1}}\alpha_0(H_E)+O\lr{ \nu^{-1}\abs{\frac{J_{d}(\nu A(H_E))}{A(H_E)^{d}}}}\right],\]
where the implied error term is uniform uniform over $H_E\leq a.$ Using the definition \eqref{E:FW_alpha0_bes} of $\alpha_0,$ we find
\[\frac{\alpha_0(H_E)}{A(H_E)^{d-1}}=\frac{A(H_E)^{1/2}}{H_E^{d/2}} \lr{\frac{H_E}{1-H_E}}^{1/4}.\]
Combining this with the standard Bessel asymptotics 
\[J_k(t)\sim \sqrt{2/(\pi t)}\cos(t-\tfrac{\pi}{2}(k+1/2))\lr{1+O(t^{-1})},\quad \text{as }t\gives\infty,\]
we obtain
\[W_{\hbar, E_N(\hbar)}(x, \xi) = (-1)^N\frac{2H_E^{d/2 - 1/4}}{(1-H_E)^{1/4}(2\pi\hbar)^d}\lr{\frac{\hbar}{2\pi E }}^{1/2}\cos\lr{\nu A(H_E)-\frac{\pi}{2}(d-1/2)}+O\lr{\hbar^{-d+3/2}},\]
where the implied constant is uniform when $H_E$ ranges over compact subsets of $(0,1).$ To complete the proof, note that since $E=\hbar(N+d/2)$ and $H_E=H/E$, we have
\begin{align*}
\nu A(H_E)-\frac{\pi}{2}(d-1/2)&=\frac{2E}{\hbar}\lr{(H_E - H_E^2)^{1/2} + \sin^{-1}(H_E^{1/2})}-\frac{\pi}{2}(d-1/2)\\
&=\frac{2E}{\hbar}\lr{H_E(H_E^{-1} - 1)^{1/2} + \frac{\pi}{2}- \cos^{-1}\lr{H_E^{1/2}}}-\frac{\pi}{2}(d-1/2)\\
&=N\pi +\frac{\pi}{4} +\frac{2H}{\hbar}(H_E - 1)^{1/2} -\frac{2E}{\hbar} \cos^{-1}\lr{H_E^{1/2}}.
\end{align*}
Thus, we obtain 
\[W_{\hbar, E_N(\hbar)}(x, \xi) = \frac{2 H_E^{-d/2}}{(H_E^{-1}-1)^{1/4}(2\pi\hbar)^d}\lr{\frac{\hbar}{2\pi E }}^{1/2}\cos\lr{\xi_{\hbar, E, H}}~+~O\lr{\hbar^{-d+3/2}},\]
where 
\[\xi_{\hbar, E, H}= \frac{\pi}{4} +\frac{2H}{\hbar}(H_{E}^{-1} - 1)^{1/2} -\frac{2E}{\hbar} \cos^{-1}\lr{H_E^{1/2}}.\]
This completes the proof of Proposition \ref{BESSELPROP}. \hfill $\square$

\subsection{Semiclassical Proof of \eqref{E:interior-cosine}}\label{S:sc-cosine-pf}

In this section, we give a separate semiclassical proof of \eqref{E:interior-cosine}. To do this, we express  $W_{U_\hbar(t)}(\rho)$ from \ref{WIGasDIST} in the WKB form,
\[W_{U_\hbar(t)}(\rho)=\ucal_{\hbar}(t, x, \xi) = A_\hbar(t)\exp\lr{\frac{i}{\hbar}\Psi(t,\rho)},\]
where 
\[A_\hbar(t)=\lr{2\pi \hbar \cos(t/2)}^{-d},\qquad \Psi(t,\rho)=-2 H_E \tan(t/2)+tE\]
and
\[H_E=\frac{H(x,\xi)}{E}=\frac{\norm{x}^2+\norm{\xi}^2}{2E}\]
as before. The Fourier expansion \eqref{E:Fourier-expansion} and Proposition \ref{ucalprop-new} therefore allow us to write
  \begin{equation}
W_{\hbar, E_N(\hbar)}(x,\xi)= \int_{-\pi}^\pi A_\hbar(t) \exp\left[\frac{i}{\hbar}\Psi_{H, E}(t)\right]\frac{dt}{2\pi}.\label{E:Fourier-Wigner}
\end{equation}
Fix $\epsilon\in (0,1/2)$. We will show that the relation \eqref{E:interior-cosine} holds uniformly for $(x,\xi)$ satisfying
\[\epsilon  < H_E< (1-\epsilon)E.\]
To do this, note that there exists $\delta=\delta(\epsilon)>0$ so that 
\[t\in [-\pi, \,-\pi+\delta]\cup[\pi-\delta, \,\pi]\quad \Rightarrow \quad \cos^2(t/2) < 1-\epsilon.\]
For this delta, fix a smooth cut-off function $\chi:[-\pi,\pi]\gives [0,1]$ so that
\[\chi(t)=
\begin{cases}
  1,&\quad t\in [-\pi+\delta/2,\, \pi-\delta/2]\\
  0,&\quad t\not\in [-\pi+\delta/4,\, \pi-\delta/4]
\end{cases}.
\]
Using Lemma \ref{LOCLEM} and \eqref{E:Fourier-Wigner}, we have
  \begin{equation}
W_{\hbar, E_N(\hbar)}(x,\xi)= \int_{-\pi}^\pi \chi(t) A_\hbar(t) \exp\left[\frac{i}{\hbar}\Psi_{H, E}(t)\right]\frac{dt}{2\pi}+O(\hbar^\infty).\label{E:Fourier-Wigner-Loc}
\end{equation}
The critical point equation 
\[\dell_t\Psi_{\rho, E}=0\qquad \Longleftrightarrow\qquad \cos^2(t/2)=H_E\] 
has four solutions
\[t_{\pm}=\pm 2 \cos^{-1}\lr{H_E^{1/2}},\qquad T_{\pm}=\pm 2 \cos^{-1}\lr{-H_E^{1/2}}.\]
However, only $t_{\pm}/2$ lie in the interval $[-\pi/2,\pi/2]$ over which we are integrating. By construction, $t_{\pm}$ lie in the interval where $\chi$ is identically equal to $1.$ A simple computation shows
\[\Psi_{\rho, E}(t_{\pm})=2\lr{\mp H\sqrt{H_E^{-1}-1} +\frac{t_{\pm}}{2}E},\qquad \dell_{tt}\Psi_{\rho,E}(t_{\pm})=\mp E\sqrt{\frac{E}{H}-1},\]
verifying that these critical points are non-degenerate when $H<E.$ We may therefore apply the method of stationary phase (Lemma \ref{L:SP LO}) to the integral \eqref{E:Fourier-Wigner-Loc}. Before writing the formula, note that 
\[e^{\frac{i\pi}{4}\sgn(\dell_{tt}\Psi(t_{+})) +\frac{i}{\hbar}\Psi(t_{+})}A_{\hbar}(t_+)=\overline{e^{\frac{i\pi}{4}\sgn(\dell_{tt}\Psi(t_{-})) +\frac{i}{\hbar}\Psi(t_{-})}A_{\hbar}(t_-)}\]
are complex conjugates. Thus, we may take real parts in the stationary phase expansion to write $W_{\hbar, E_N(\hbar)}(x,\xi)$ as an overall prefactor
\[(2\pi\hbar)^{-d+1/2} P_{E,H}  ,\quad P_{E,H}:=\lr{\pi E^{1/2}\lr{\frac{E}{H}-1}^{1/4}\lr{\frac{H}{E}}^{d/2}}^{-1}\]
times the contribution from $t_{\pm}:$
\[\cos\left[-\frac{\pi}{4} +\frac{2E}{\hbar}\lr{-\frac{H}{E}\lr{\frac{E}{H}-1}^{1/2}+\frac{t_{+}}{2}}\right]\]
plus an $O(\hbar)$ error. Stationary phase asymptotics of Lemma \ref{L:SP LO} complete the proof.  \hfill $\square$

\subsection{Proof of Proposition \ref{SMBALLS}}

Fix $\epsilon>0$ and a symbol $a(x,\xi) \in C^\infty (T^*\R^d).$ We wish to show that for $\epsilon>0$ sufficiently small 
   \begin{equation}
\int_{T^*\R^d} a(x, \xi) 
W_{\hbar, E_N(\hbar)}(x,\xi) \psi_{\epsilon,\hbar}(x,\xi)d x d \xi=O(\hbar^{\frac{1-d}{2}-2d\epsilon}\norm{a}_{L^\infty(B_0(\hbar^{1/2-\epsilon}))}),\label{E:localized2}
\end{equation}
where $\psi_{\epsilon,\hbar}$ is a smooth radial cut-off that is identically $1$ on the ball of radius $\hbar^{1/2-\epsilon}$ and is identically $0$ outside the ball of radius $2\hbar^{1/2-\epsilon}.$ 

\begin{proof} The Laguerre formula from Proposition \ref{WIGNERLAGUERRE} shows that the above integral, in polar coordinates $(x,\xi)\mapsto (r,\omega)$,  is
\[\int_{\R^{2d}}\left[\frac{(-1)^N}{\pi^d} L_N^{(d-1)}(2r^2)e^{-r^2}\right]\left[\psi_{\epsilon,\hbar}(\hbar^{-1/2}r) \int_{S^{d-1}} a(\hbar^{-1/2}r\omega) d\omega\right] r^{2d-1}dr,\qquad r^2=\norm{x}^2+\norm{\xi}^2.\]
Applying Cauchy-Schwarz, shows that the absolute value of this expression is bounded above by a constant times
\begin{equation}
\lr{\int_0^\infty \abs{ L_N^{(d-1)}(2r^2)e^{-r^2}}^2 r^{2d-1}dr}^{1/2}\hbar^{-d}\norm{\chi_\hbar a}_{L^2(T^*\R^d)}.\label{E:loc-bound}
\end{equation}
The integral on the left can be computed exactly using the usual orthogonality relations for Laguerre functions. Indeed, up to a constant we depending only on $d$, we find
\[\int_0^\infty \abs{ L_N^{(d-1)}(2r^2)e^{-r^2}}^2 r^{2d-1}dr\simeq \int_0^\infty \abs{ L_N^{(d-1)}(\eta)}^2e^{-\eta} \eta^{d-1}d\eta=\frac{(N+d-1)!}{N!}\simeq N^{d-1}\simeq \hbar^{-d+1}.\]
On the other hand, 
\[\hbar^{-d}\norm{\psi_{\epsilon,\hbar} a}_{L^2(T^*\R^d)}=O\lr{ \norm{a}_{L^\infty(B_0(\hbar^{1/2-\epsilon})} \hbar^{-2d\epsilon}},\]
where $B_0(\hbar^{1/2-\epsilon})$ is the ball of radius $2\hbar^{1/2-\epsilon}$ centered at $(0,0)\in T^*\R^d.$ Thus, using \eqref{E:loc-bound}, we find that
\[\int_{T^*\R^d} a(x, \xi) 
W_{\hbar, E_N(\hbar)}(x,\xi) \psi_{\epsilon,\hbar}(x,\xi)d x d \xi= O(\hbar^{\frac{1-d}{2}-2d\epsilon}\norm{a}_{L^\infty(B_0(\hbar^{1/2-\epsilon}))}),\]
confirming \eqref{E:localized}. 
\end{proof}

\subsection{\label{PROOFPROPOPWa}  Proof of Proposition \ref{OPWa}}
We now use the semi-classical asymptotics of Section \ref{S:sc-cosine-pf} to prove Proposition \ref{OPWa}.

 \begin{proof} In view of Proposition \ref{SMBALLS}, it suffices to study
\begin{equation}
\int_{T^*\R^d} (1-\psi_{\hbar, \epsilon}(x,\xi))a(x, \xi) W_{\hbar, E_N(\hbar)}(x,\xi) d x d \xi,\label{E:weak-star-main}
\end{equation}
where we recall that $\psi_{\epsilon,\hbar}$ is a smooth radial cut-off that is identically $1$ on the ball of radius $\hbar^{1/2-\epsilon}$ and is identically $0$ outside the ball of radius $2\hbar^{1/2-\epsilon}.$ Note that by the exponential decay of $W_{\hbar, E}(x,\xi)$ on $\set{H(x,\xi)>E}$ (see e.g. Proposition \ref{EXTDECAY}), we may and shall assume that the integrand is compactly supported in the $(x,\xi)$ up to $O(\hbar^\infty)$ errors. To study \eqref{E:weak-star-main}, we use \eqref{E:Fourier-Wigner} to rewrite it modulo a term of size $O(\hbar^\infty)$ as follows:
  \begin{equation} 
\int_{T^*\R^d}   \int_{-\pi}^\pi\chi_\delta(t) (1-\psi_{\epsilon,\hbar}(x,\xi)) A_\hbar(t) \exp\left[\frac{i}{\hbar}\Psi(t)\right] a(x, \xi)  \frac{dx d \xi dt}{2\pi},\label{E:Fourier-Wignera}
\end{equation} 
where $\Psi$ is defined by \eqref{PSIDEF} and $\chi_\delta$ is defined as in Lemma \ref{LOCLEM}. The critical point equation, $d_{x, \xi, t} \Psi_{\rho, E} = 0$ becomes
$$ \tan(t/2) d_{x, \xi} H_E(x, \xi) =0, \;\; \cos^2\lr{ \frac{t}{2}} = H_E.  $$
The solutions are $\{t = 0, H_E = 1\}$ or $\{(x, \xi) =0, t= \pi\}$, but by construction the second solution is not in the support of $1-\psi_{\epsilon,\hbar}$. The solution $H_E(x, \xi)  = 1, t = 0$ corresponds to $\{t = 0\} = \{0\} \times \Sigma_E$ and thus gives a non-degenerate critical manifold of dimension $2d -1$ in a parameter space of dimension $2d$. The phase vanishes on the critical point set and a simple computation shows that the normal Hessian of the phase in the radial variable $r$ on $T^*R^d$ and $t$ is
%\[
$\lr{\begin{array}{cc}
  0& - 1\\ -1 & 0 
\end{array}}.$
%\]
The derivatives of the amplitude are dominated by the $\hbar^{-1/2+\epsilon}$ sized derivatives of $\psi_{\epsilon, \hbar}$. The leading order term from stationary phase is 
$$\hbar^{-d} C_d\hbar \int_{\Sigma} a d\mu_L \simeq C_d \hbar^{-d +1}  \int_{\Sigma} a d\mu_L  $$
for a dimensional constant $C_d$, and the next term is on the order of $\hbar^{-d+1+2\epsilon}=o(\hbar^{-d+1}),$ as desired. \end{proof}

\subsection{Indirect proof  of Proposition \ref{OPWa}  }
 We can give a `softer' proof using the eigenvalue equation satisfied
 by $W_{\hbar, E_N(\hbar)}$.

\begin{proof}      From \eqref{TRACEP} and \eqref{INTEGRALS}(i)
we have the trace identity $$\frac{1}{\dim V_{\hbar, E_N(\hbar)}} \mathrm{Tr} Op_h^w(a) \Pi_{\hbar, E_N(\hbar)} = \frac{1}{\dim V_{\hbar, E_N(\hbar)}} \int_{T^* \R^d} a(x, \xi) W_{\hbar, E_N(\hbar)}(x, \xi) dx d\xi.$$
Since
$$ | \mathrm{Tr} Op_h^w(a) \Pi_{\hbar, E_N(\hbar)} | \leq ||Op^w(a)||_{L^2 \to L^2}\;  \mathrm{Tr}\Pi_{\hbar, E_N(\hbar)}, $$ 
 it follows that  the
 normalized  traces
 form a bounded family of linear functionals on the space $\Psi^0(\R^d)$  of zeroth order Weyl
 pseudo-differential operators. We refer to \cite[Section 24]{Sh} for background on norms of operators and norms of symbols (e.g. \cite[Problem 24.8]{Sh}).

   Assume that $E_N(\hbar) \to E$.  It follows from the exponential decay bounds that   $\{\frac{1}{\dim V_{\hbar, E_N(\hbar)}} W_{\hbar, E_N(\hbar)} \}$ is a tight sequence of signed measures of bounded mass  (i.e. $L^1$-norm)  on $\R^{2n}$. 
    Let $\nu$ be any limit measure. 

It follows that
\begin{equation} \label{IPTR} \begin{array}{l} 
E_N(\hbar)  \int_{T^* \R^d} a(x, \xi)W_{\hbar, E_N(\hbar)}(x, \xi) dx d\xi\\ \\ 
= 
 \int_{T^* \R^d} a(x, \xi) \left( - \frac{\hbar^2}{4}  (\Delta_{\xi} + \Delta_x) + (||x||^2 + ||\xi||^2) \right)  W_{\hbar, E_N(\hbar)}(x, \xi) dx d\xi\\ \\
= 
\int_{T^* \R^d} a(x, \xi) \left( (||x||^2 + ||\xi||^2) \right)  W_{\hbar, E_N(\hbar)}(x, \xi) dx d\xi  \\ \\ +
 \int_{T^* \R^d}\left( [- \frac{\hbar^2}{4}  (\Delta_{\xi} + \Delta_x)]  a(x, \xi)  \right)  W_{\hbar, E_N(\hbar)}(x, \xi) dx d\x\\ \\
  =   \int_{T^* \R^d} a(x, \xi) \left( (||x||^2 + ||\xi||^2) \right)  W_{\hbar, E_N(\hbar)}(x, \xi) dx d\xi + O(\hbar^2), \end{array} \end{equation}
  hence,
  \begin{equation} \label{ID2} \frac{1}{\dim V_{\hbar, E_N(\hbar)}} \;  \int_{T^* \R^d} a(x, \xi) \left( (||x||^2 + ||\xi||^2 - E_N(\hbar)) \right)  W_{\hbar, E_N(\hbar)}(x, \xi) dx d\xi = O(\hbar^2). \end{equation}
  
  Let $\wt{a}(\rho) = \int_{\{H = \rho\}} a(\rho \omega ) d S(\omega)$ be
  the normalized spherical means of $a$ with respect to Lebesgue measure
  on $\R^{2d}$. Also, let $\wcal_{\hbar, E_N(\hbar)}(\rho) = W_{\hbar, E_N(\hbar)}(x, \xi)$ be the radial expression for the Wigner distribution.
  Then, \eqref{ID2} becomes,
    \begin{equation} \label{ID2b}  \frac{1}{\dim V_{\hbar, E_N(\hbar)}} \int_0^{\infty} \wt{a}(\rho) \left(\rho - E_N(\hbar)) \right)  \wcal_{\hbar, E_N(\hbar)}(\rho) \rho^{d-1} d\rho = O(\hbar^2). \end{equation}
    Taking the limit $\hbar \to 0, E_N(\hbar) \to E$ and using  \eqref{ID2}-\eqref{ID2b},
    the weak* limit measure satisfies
    $$\int_0^{\infty} \tilde{a}(\rho) (\rho - E) d\nu = 0. $$
    It follows that $d\nu$ is supported on $\Sigma_E$ and since it is 
    a measure it is a constant multiple of $\delta_{\Sigma_E}$. The constant
    is fixed by \eqref{TRACEP}.

%We now use the trace identity \edit{Give the following a number and refer 
%to it here:
%$$\rm{Tr} Op_h^w(a) \Pi_{\hbar, E_N(\hbar)} = \int_{T^* \R^d} a(x, \xi) %%W_{\hbar, E_N(\hbar)}(x, \xi) dx d\xi. $$}
% and integrate the operator

\end{proof}

%Using 
% $\frac{d}{dx} L_k^{\alpha} = - L_{k-1}^{\alpha +1}(x) $ (\cite[1.1.49]{T}),
%with $\alpha +1 = d -1, k -1 = N$ gives that,
%$$\langle {\bf 1}_{B_R}, W_{\hbar, E_N(\hbar)} \rangle = $$ 

%We also note that by \eqref{WIGNEREIG3},
%$$\begin{array}{lll} E_N(\hbar) \langle {\bf 1}_{B_R}, W_{\hbar, E_N(\hbar)} %\rangle & = & \langle {\bf 1}_{B_R}, \left( - \frac{\hbar^2}{4}  (\Delta_{\xi} + %\Delta_x) + (||x||^2 + ||\xi||^2) \right) W_{\hbar,E_N(\hbar)}. \rangle
% \\ &&\\& = &
%\langle {\bf 1}_{B_R}, (||x||^2 + ||\xi||^2)  W_{\hbar,E_N(\hbar)} \rangle \\ %&&\\
%& +& \langle {\bf 1}_{B_R}, \left( - \frac{\hbar^2}{4}  (\Delta_{\xi} + \Delta_x) %\right)  W_{\hbar,E_N(\hbar)} \rangle \rangle,

%\end{array}$$
%One has, 
%$$ - \frac{\hbar^2}{4}  \int_{B_R} \nabla \cdot \nabla   W_{\hbar,E_N(\hbar)} %dx d \xi =  - \frac{\hbar^2}{4}  \int_{\partial B_R}  \frac{\partial}{\partial r}   %W_{\hbar,E_N(\hbar)} (\rho) d S =  - \frac{\hbar^2}{4}  |\partial B_R|  
%(2R \frac{\partial}{\partial \rho}   W_{\hbar,E_N(\hbar)}) (R^2). $$
 %Note that $ \frac{\partial}{\partial r}   W_{\hbar,E_N(\hbar)}  =  2r \%frac{\partial}{\partial \rho}    W_{\hbar,E_N(\hbar)}  . $
 %Using
 %$\frac{d}{dx} L_k^{\alpha} = - L_{k-1}^{\alpha +1}(x)$ (\cite[1.1.49]{T}), with %$\alpha +1 = d -1, k -1 = N$,  the 
% derivative may be removed.
% It follows that
 
%$$\begin{array}{lll} E_N(\hbar) \int_{H \leq R}  W_{\hbar, E_N(\hbar)} dx d \%xi 
%& = &  \int_{H \leq R}  \rho W_{\hbar, E_N(\hbar)} dx d \xi 
% - \frac{\hbar^2}{4}  |\partial B_R|  2R \frac{\partial}{\partial \rho}    %W_{\hbar,E_N(\hbar)} (R^2). \end{array}$$

\section{Appendix}

\subsection{Appendix on the Airy function}
The Airy function is defined by,

$$Ai(z) = \frac{1}{2 \pi i} \int_L e^{v^3/3 - z v} dv, $$
where $L$ is any contour that beings at a point at infinity in the sector $- \pi/2 \leq \arg (v) \leq - \pi/6$ and ends
at infinity in the sector $\pi/6 \leq \arg(v) \leq \pi/2$.  In the region $|\arg z| \leq (1 - \delta) \pi$
in $\C - \{\R_-\}$ write $v = z^{\half} + i t ^{\half}$ on the upper half of L and $v = z^{\half} - i t^{\half}$
in the lower half. Then
\begin{equation} \label{AIRYASYM} \Ai(z) = \Psi(z) e^{- \frac{2}{3} z^{3/2}}, \;
\mathrm{
with}\;
\Psi(z) \sim z^{-1/4} \sum_{j = 0}^{\infty} a_j z^{- 3j/2}, \;\; a_0 = \frac{1}{4} \pi^{-3/2}. \end{equation}

\subsection{Appendix on Laguerre functions \label{S:Laguerre}}

The Laguerre polynomials $L_k^{\alpha}(x)$  of degree $k$ and of type $\alpha$ on $[0, \infty)$  are defined by
\begin{equation}
e^{-x} x^{\alpha} L_k^{\alpha}(x) = \frac{1}{k!} \frac{d^k}{dx^k} (e^{-x}x^{k +\alpha}). \label{E:LagDef}
\end{equation}
They are solutions of the Laguerre equation(s),
$$x y'' + (\alpha + 1 -x) y(x)' + k y(x) = 0.$$

%The Laguerre functions related to Hermite functions are those where %$\alpha =\frac{n}{2}-1 \in \half \Z$.
For fixed $\alpha$ they are orthogonal polyomials of $L^2(\R_+, e^{-x} x^{\alpha} dx) $. 
An othonormal basis is given by
$$\lcal_k^{\alpha}(x) = \left(\frac{\Gamma(k +1)}{\Gamma(k + \alpha+ 1)}\right)^{\half} L_k^{\alpha}(x).$$
We will have occasion to use the following generating function:
\[\sum_{k =0}^{\infty}  L_k^{\alpha}(x)  w^k = (1 -w)^{-\alpha -1} e^{- \frac{w}{1-w} x}\]

%$$\left\{ \begin{array}{l} \sum_{k =0}^{\infty}  L_k^{\alpha}(x)  w^k = (1 -w)^{-\alpha -1} e^{- \frac{w}{1-w} x}, \\ \\
%\sum_{k =0}^{\infty}  \frac{L_k^{\alpha}(x) }{\Gamma(k + \alpha +1)}  w^k= e^w (x w)^{-\frac{\alpha}{2}} J_{\alpha}(2 \sqrt{x w}),
%\\ \\\sum_{k =0}^{\infty} \frac{ \Gamma(k +1)}{\Gamma(k + \alpha +1)}   L_k^{\alpha}(x) L_k^{\alpha}(y)  w^k
%=(1 - w)^{-1} (- x y w)^{-\frac{\alpha}{2}} e^{-\frac{w}{1-w} (x + y)} J_{\alpha}(\frac{2 \sqrt{- x y w}}{1-w}). \end{array} \right.$$
\noindent The most useful integral representation for the Laguerre functions is
\begin{equation}\label{INTnew}
e^{-x/2} L_n^{(\alpha)} (x) = \lr{-1}^n\oint \frac{e^{-\frac{x}{2}\cdot \frac{1-z}{1+z}}}{z^n\lr{1+z}^{\alpha+1}}\frac{dz}{2 \pi i z} ,
\end{equation}
where the contour encircles the origin once counterclockwise. Equivalently,
\begin{equation}\label{INT} e^{-x/2} L_n^{(\alpha)} (x) = \frac{(-1)^n}{2^{\alpha}}
\frac{1}{2 \pi i} \int^{1+} e^{- x z/2} \left( \frac{1 + z}{1 - z} \right)^{\nu/4} (1 - z^2)^{
\frac{\alpha-1}{2}} d z \end{equation}
where $\nu= 4n + \alpha + 2$ and the contour encircles $z = 1$ in the positive
direction and closes at $\Re z = \infty, |\Im z| = \mathrm{constant}$. In (5.9) of \cite{FW} the Laguerre functions are represented as the oscillatory integrals,
\begin{equation} \label{nu2}e^{- \nu t/2} L_n^{\alpha}(\nu t) = \frac{(-1)^n}{2^{\alpha}} \frac{1}{2 \pi i}
\int_{\lcal} [1 - z^2(u)]^{\frac{\alpha-1}{2}} \exp \{ \nu \left( \frac{u^3}{3} - B^2(t) u \right) \} du, \end{equation}
where $\nu = 4 n + 2 \alpha + 2$ and $B(t) $ is given by \eqref{BFORMULA}  (cf (5.5) of \cite{FW}) and and  $\lcal$ is a branch of the hyperbolic curve in the right half plane.

\subsection{Stationary Phase Expansion}\label{SO} We recall here the following simple version of the  stationary expansion, which we use in several proofs.
\begin{Lem}[\cite{Hor} Theorem 7.7.5]\label{L:SP LO}
Suppose $a,S\in \mathcal S(\R)$ and $S$ is a complex-valued phase function such that $\Im S|_{\supp(a)} \geq 0$ with a unique non-degenerate critical point at $t_0\in \supp(a)$
satisfying $\Im S(t_0)=0.$ Then, 
%\[I(h)=\int_{\R}e^{iS(x)/h}a(x)dx.\]
%Then,
\begin{align}
\label{E:Stationary Phase 2}  I(h)=\int_{\R}e^{iS(x)/h}a(x)dx = %\left[a(t_0)+O(h)\right],\qquad 
 e^{i \frac{\pi}{4} \text{sgn} S''(0)} \lr{\frac{2\pi h}{\abs{S''(t_0)}}}^{1/2}\left[a(t_0)+O(h)\right],\qquad
\end{align}
\end{Lem}


\begin{thebibliography}{HHHH}

\bibitem[AT15]{AT15} Aptekarev, A. I.; Tulyakov, D. N. Asymptotic behavior of the Lp-norms of Laguerre polynomials. (Russian) Uspekhi Mat. Nauk 70 (2015), no. 5(425), 177--178; translation in Russian Math. Surveys 70 (2015), no. 5, 955–957.

\bibitem[AT15b]{AT15b}  [A. I. Aptekarev and D. N. Tulyakov, "Asymptotics of Lp-norms of Laguerre polynomials and entropic moments of D-dimensional oscillator'', Preprint No. 41, Keldysh Inst. Appl. Math., 2015].


%\bibitem[BPD]{BPD} N. L. Balazs, H. C. Pauli and O. B. Dabbousi, Tables of Weyl %%Fractional Integrals for the Airy Function, 
%Mathematics of Computation
%Vol. 33, No. 145 (Jan., 1979), pp. 353-358+s1-s9


%\bibitem[B]{B} G. A. Baker, 
%Degeneracy of the n-dimensional, isotropic, harmonic oscillator. 
%Phys. Rev. (2) 103 (1956), 1119-1120. 

%\bibitem[Barg]{Barg} Bargmann, V.; Moshinsky, M.
%Group theory of harmonic oscillators. I. The collective modes. 
%Nuclear Phys. 18 1960 697–712. 


\bibitem[Ber]{Ber} Berry, M. V. Semi-classical mechanics in phase space: a study of Wigner's function. Philos. Trans. Roy. Soc. London Ser. A 287 (1977), no. 1343, 237–271.

\bibitem[Ber91]{Ber91} M.V. Berry, 
Some quantum-to-classical asymptotics. {\it Chaos et physique quantique} (Les Houches, 1989), 251-304, North-Holland, Amsterdam, 1991

%\bibitem[Ber2]{Ber2} M.V. Berry, Statistics of nodal lines and points in chaotic %quantum billiards: perimeter corrections, fluctuations, curvature. J. Phys. A 35 (2002), %no. 13, 3025-3038.

%\bibitem[BW]{BW} Berry, M. V. and Wright, F. J. (1980) Phase-space projection identities for diffraction catastrophes. J. Phys. A 13, 149–160.



%\bibitem[BH]{BH} W.E.  Bies and E. J. Heller,  Nodal structure of chaotic
%    eigenfunctions. J. Phys. A 35 (2002),
%    no. 27, 5673-5685.

%\bibitem[BouG81]{BouG81} L. Boutet de Monvel  and V. Guillemin. The Spectral Theory of Toeplitz Operators. Annals of Mathematic Studies 99. Princeton: Princeton University Press, 1981. 

\bibitem[CG69I]{CG69I} K.E. Cahill and R.J. Glauber, Ordered expansions in boson amplitude operators Phys. Rev. 177, 1857-1881 (1969).

\bibitem[CG69II]{CG69II} K.E. Cahill and R.J. Glauber, Density operators and quasiprobability distributions,  Phys. Rev. 177 (5) (1969):
1882-1902.

\bibitem[Ch80]{Ch80}  J. Chazarain, 
Spectre d'un hamiltonien quantique et m\'ecanique classique. 
Comm. Partial Differential Equations 5 (1980), no. 6, 595-644.

%\bibitem[CR]{CR} M. Combescure and D. Robert,  Coherent states and applications in mathematical physics. Theoretical and Mathematical Physics. Springer, Dordrecht, 2012.

%\bibitem[CR2]{CR2} M. Combescure and D. Robert, Quadratic quantum Hamiltonians revisited. Cubo 8 (2006), no. 1, 61–86. 



%\bibitem[DD]{DD} K.T.R. Davies and R.W. Davies, On the Wigner Distribution Function for an Oscillator, Annals of Physics, Volume 89, Issue 2 (1975)  p. 261-273.

%\bibitem[DGP10]{DGP10} T. Dittrich, E. A.  Gomez, and L. A. Pachon, Semiclassical propagation of Wigner functions,  Journal of Chemical Physics 132, 214102 (2010).

%\bibitem[DVS06]{DVS06} T.  Dittrich, C. Viviescas, and L. Sandoval, Semiclassical propagator of the Wigner function, Phys. Rev. Lett. 96, 070403, (2006). 

%\bibitem[dRO]{dRO} de M Rios, P. P.; Ozorio de Almeida, A. M. On the propagation of semiclassical Wigner functions. J. Phys. A 35 (2002), no. 11, 2609–2617.


%\bibitem{F}
%R. Feng, \textit{Szasz analytic functions and toric varieties},  J. Geom. Anal. 22 (2012), no. 1, 107-131.

%\bibitem[FM1]{FM1} Filippas, Stathis; Makrakis, George N. Semiclassical Wigner function and geometrical optics. Multiscale Model. Simul. 1 (2003), no. 4, 674–710 

%\bibitem[FM2]{FM2} Filippas, S.; Makrakis, G. N. On the evolution of the semi-classical Wigner function in higher dimensions. European J. Appl. Math. 17 (2006), no. 1, 33-62.



\bibitem[F]{F} G. Folland, Harmonic Analysis in Phase Space, Ann. of Math. Stud., vol. 122, Princeton University Press, 1989

\bibitem[FW]{FW} C.L. Frenzen and R.  Wong, 
Uniform asymptotic expansions of Laguerre polynomials.
SIAM J. Math. Anal. 19 (1988), no. 5, 1232-1248. 

\bibitem[G78]{G78} V. Guillemin,  Some spectral results on rank one symmetric spaces. Advances in Math. 28 (1978), no. 2, 129-137.


%\bibitem[G78B]{G78B}  V. Guillemin,
%An addendum to: "Some spectral results on rank one symmetric spaces'' (Advances in %%Math. 28 (1978), no. 2, 129-137).
%Advances in Math. 28 (1978), no. 2, 138-147. 

\bibitem[GU12]{GU12}  V. Guillemin,  A. Uribe,  and Z.  Wang, Band invariants for perturbations of the harmonic oscillator. J. Funct. Anal. 263 (2012), no. 5, 1435-1467.

%%\bibitem[GUW]{GUW} V. Guillemin, A.  Uribe, and Z. Wang, Canonical forms for perturbations of the harmonic oscillator. New York J. Math. 21 (2015), 163-180.

%\bibitem[Gr]{Gr} H.J. Groenewold, "On the Principles of elementary quantum mechanics", Physica,12 (1946) pp. 405-460.

%\bibitem[GLS]{GLS} Grossmann, A.; Loupias, G.; Stein, E. M.
%An algebra of pseudodifferential operators and quantum mechanics in phase space.
%Ann. Inst. Fourier (Grenoble) 18 1968 fasc. 2, 343–368, 

%\bibitem[GS]{GS}V.  Guillemin and S. Sternberg,  Some problems in %integral geometry and some related problems in microlocal analysis. Amer. %J. Math. 101 (1979), no. 4, 915-955.

%\bibitem[GS2]{GS2}  Guillemin, V.; Sternberg, S. The metaplectic representation, Weyl operators and spectral theory. J. Funct. Anal. 42 (1981), no. 2, 128-225. 

\bibitem[HZZ15]{HZZ15}  B. Hanin, S. Zelditch, P. Zhou  Nodal Sets of Random Eigenfunctions for the Isotropic Harmonic Oscillator, International Mathematics Research Notices, Vol. 2015, No. 13, pp. 4813-4839,  (2015)  (arXiv:1310.4532)

\bibitem[HZZ16]{HZZ16}  B. Hanin, S. Zelditch and P. Zhou, 
   Scaling of harmonic oscillator eigenfunctions and their nodal sets around the caustic. Comm. Math. Phys. 350 (2017), no. 3, 1147-1183 (arXiv:1602.06848).
     \bibitem[HZ19A]{HZ19A} B. Hanin and S. Zelditch, Interface asymptotics of Weyl-Wigner distributions (in preparation).
   
   \bibitem[HZ19B]{HZ19B} B. Hanin and S. Zelditch, Universality of Schrodinger  scaling asymptotics around the caustic (in preparation).

\bibitem[Hor]{Hor}  L. H\"ormander, {\it The analysis of linear partial differential operators. III.   } Classics in Mathematics. Springer-Verlag, Berlin, 2003. 

\bibitem[HorIV]{HorIV}   L. H\"ormander, {\it The analysis of linear partial differential operators. IV. Fourier integral operators.}  Classics in Mathematics. Springer-Verlag, Berlin, 2009.

%\bibitem[H]{H} R. Howe,  Quantum mechanics and partial differential equations. J. Funct. Anal. 38 (1980), no. 2, 188-254.

%\bibitem[H2]{H2} R. Howe,The oscillator semigroup. The mathematical heritage of Hermann Weyl (Durham, NC, 1987), 61-132, Proc. Sympos. Pure Math., 48, Amer. Math. Soc., Providence, RI, 1988. 

%\bibitem[JZ]{JZ} A. J. E. M Janssen and S. Zelditch, 
%Szegő limit theorems for the harmonic oscillator.
%Trans. Amer. Math. Soc. 280 (1983), no. 2, 563-587. 

%%\bibitem[KMT]{KMT} Supersymmetry and the relationship between the %Coulomb and oscillator problems in arbitrary dimensions
%VA Kostelecký, MM Nieto, DR Truax - Physical Review D, 1985 - APS

%\bibitem[KR]{KR} Radial Coulomb and oscillator systems in arbitrary dimensions
%VA Kostelecký, N Russell - Journal of Mathematical Physics, 1996 

%\bibitem[L]{L} B. Leaf, Weyl Transform in Nonrelativistic Quantum Dynamics, 
%J. Math. Phys.  9, 769 (1968).


%\bibitem[M]{M} J.E. Moyal, "Quantum mechanics as a statistical theory", Proceedings of the Cambridge Philosophical Society, 45  (1949) , 99-124. 

\bibitem[Oz]{Oz} A. M. Ozorio de Almeida, 
The Weyl representation in classical and quantum mechanics. 
Phys. Rep. 295 (1998), no. 6, 265-342. 

%\bibitem[IRT]{IRT} R. Imekraz, D. Robert and L. Thomann, On random Hermite series,   Trans. Amer. Math. Soc. 368 (2016), no. 4, 2763-2792, arXiv:1403.4913.

%\bibitem[It]{It} C.  Itzykson, Remarks on boson commutation rules. Comm. Math. Phys. 4 1967 92-122.

%\bibitem[JZ]{JZ}  Jakobson, Dmitry; Zelditch, Steve Classical limits of %eigenfunctions for some completely integrable systems. Emerging %applications of number theory (Minneapolis, MN, 1996), 329–354, IMA Vol. %Math. Appl., 109, Springer, New York, 1999.


\bibitem[JZ]{JZ} A. J. E. M. Janssen and S. Zelditch, 
Szeg\"o limit theorems for the harmonic oscillator. 
Trans. Amer. Math. Soc. 280 (1983), no. 2, 563-587.

%\bibitem[K]{K} Karadzhov, G. E. Riesz summability of multiple Hermite series in Lp %spaces. Math. Z. 219 (1995), no. 1, 107-118.

\bibitem[O]{O} F. W. J. Olver, {\it Asymptotics and special functions. } Academic Press, New York.

%\bibitem[PRT]{PRT} A. Poiret, D.  Robert and L.  Thomann,  Random-weighted Sobolev inequalities on ℝd and application to Hermite functions. Ann. Henri Poincaré 16 (2015), no. 2, 651-689.

%\bibitem[RD]{RD} Symmetry algebra of the anisotropic harmonic oscillator %with commensurate frequencies
%G Rosensteel, JP Draayer - Journal of Physics A: Mathematical …, 1989

%\bibitem[Res]{Res}  S. I. Resnick, {\it A probability path}.  Modern %Birkhauser Classics. Birkauser/Springer, New York, 2014. 

%\bibitem[R87]{R87} D. Robert,{\it
%Autour de l'approximation semi-classique. }
%Progress in Mathematics, 68. Birkhäuser Boston, Inc., Boston, MA, 1987.

%\bibitem[RS]{RS}  Julius Ross and Michael Singer, 
%Asymptotics of Partial Density Functions for Divisors, arXiv:1312.1145.

%\bibitem[Row]{Row} N. Rowley and M. Prakash, Oscillatory structureof the harmonic oscillator Wigener function, J. Phys. A: Math. Gen. 16 (1983), 3219-3236.

%\bibitem[KT]{KT} H. Koch and D.  Tataru,  Lp eigenfunction bounds for the Hermite %operator. Duke Math. J. 128 (2005), no. 2, 369-392.



%\bibitem[Reid]{Reid} W.H. Reid, Integral representations for products of Airy functions
%, Zeitschrift für angewandte Mathematik und Physik ZAMP
%March 1995, Volume 46, Issue 2, pp 159-170

\bibitem[Sh]{Sh} M.A. Shubin, {\it  Pseudodifferential operators and spectral theory}. Second edition. Springer-Verlag, Berlin, 2001.

%\bibitem[St]{St}  K. Stempak,  Heat-diffusion and Poisson integrals for Laguerre expansions. Tohoku Math. J. (2) 46 (1994),no. 1, 83-104. 


%\bibitem[TII]{TII}  M. E. Taylor,  {\it Partial differential equations II. Qualitative %studies of linear equations.}  Second edition. Applied Mathematical Sciences, %116. Springer, New York, 2011. 


%\bibitem[Te]{Te}  Temme, N. M. Asymptotic estimates for Laguerre polynomials. Z. %Angew. Math. Phys. 41 (1990), no. 1, 114-126.

%\bibitem[TV]{TV} Temme, Nico M.; Varlamov, Vladimir Asymptotic expansions for %Riesz fractional derivatives of Airy functions and applications. J. Comput. Appl. Math. %232 (2009), no. 2, 201–215



%\bibitem[Lou]{Lou} J. D.  Louck,  Group theory of harmonic oscillators in n-dimensional space. J. Mathematical Phys. 6 1965 1786-1804.

%\bibitem[Sz]{Sz} G. \szego, {\it Orthogonal polynomials.}  Fourth edition. American Mathematical Society, Colloquium Publications, Vol. XXIII. American Mathematical Society, Providence, R.I., 1975.

\bibitem[SoZ16]{SoZ16} C.D. Sogge and S. Zelditch, Steve Focal points and sup-norms of eigenfunctions. Rev. Mat. Iberoam. 32 (2016), no. 3, 971-994.

\bibitem[T]{T}  S. Thangavelu,
Lectures on Hermite and Laguerre expansions. 
With a preface by Robert S. Strichartz. Mathematical Notes, 42. Princeton University Press, Princeton, NJ, 1993.

%\bibitem[T12]{T12} S. Thangavelu, Hermite and Laguerre semigroups: %some recent developments. Orthogonal families and semigroups in %analysis and probability, 251-284, Semin. Congr., 25, Soc. Math. France, %Paris, 2012

%\bibitem[T2]{T2} Thangavelu, S.(6-IIS)
%An analogue of Gutzmer's formula for Hermite expansions. (English summary) 
%Studia Math. 185 (2008), no. 3, 279–290

%\bibitem[TW]{TW} J. A.  Toth and I. Wigman,  Counting open nodal lines of random %waves on planar domains. Int. Math. Res. Not. IMRN 2009, no. 18, 3337-3365.

%bibitem[V1]{V2} V.  Varlamov,  Differential and integral relations involving fractional %derivatives of Airy functions and applications. J. Math. Anal. Appl. 348 (2008), no. 1, %101–115. 

%\bibitem[V2]{V2} 
%V.  Varlamov, Fractional derivatives of products of Airy functions. J. Math. Anal. Appl. %337 (2008), no. 1, 667–685.

\bibitem[Wi79]{Wi79} H. Widom, 
The Laplace operator with potential on the 2-sphere.%
Adv. in Math. 31 (1979), no. 1, 63-66.

\bibitem[W32]{W32} E.P. Wigner, "On the quantum correction for thermodynamic equilibrium", Phys. Rev. 40 (June 1932) 749-759.

%\bibitem[Wik]{Wik} Wikipedia articles on {\bf Wigner functions} and {\bf Phase space formulation of quantum mechanics}. 

%\bibitem[Z]{Z} S. Zelditch,  Fine structure of Zoll spectra. J. Funct. Anal. 143 (1997), %no. 2, 415-460.




%\bibitem[Ze]{Ze} S. Zelditch,  Bernstein polynomials, Bergman kernels and %toric Kähler varieties. J. Symplectic Geom. 7, 1-26 (2009).

\bibitem[Z96]{Z96} S. Zelditch,  Maximally degenerate Laplacians. Ann. Inst. Fourier (Grenoble) 46 (1996), no. 2, 547-587.

\bibitem[ZZ16]{ZZ16} S. Zelditch and P. Zhou, Interface asymptotics of partial  Bergman kernels on $S^1$-symmetric Kaehler manifolds, to appear in 
J. Symp. Geom.  (arXiv:1604.06655).

%\bibitem[ZZ17]{ZZ17}  S. Zelditch and P. Zhou, 
%    Central Limit theorem for spectral Partial Bergman kernels, to appear
 %   in Geom. Topl.  arXiv:1708.09267.
 
 

\bibitem[Zw]{Zw} M. Zworski, {\it  Semiclassical analysis}. Graduate Studies
    in Mathematics, 138. American
    Mathematical Society, Providence, RI (2012).   MR2952218

 

\end{thebibliography}
\end{document}